\documentclass[12pt]{article}
\usepackage{graphicx,subfigure}
\usepackage{graphicx}
\usepackage{amsfonts,amsmath}
\usepackage[mathscr]{eucal}
\usepackage{amssymb}
\usepackage{amsthm}
\usepackage{bbold}
\theoremstyle{plain}
\newtheorem{thm}{Theorem}

\newtheorem{conj}{Conjecture}
\newtheorem*{lem}{Lemma}

\textheight=23.5cm \textwidth=17cm \topmargin=-1.5cm
\oddsidemargin=-0.3cm \tabcolsep=10mm

\newcommand{\be}{\begin{eqnarray}}
\newcommand{\ee}{\end{eqnarray}}
\newcommand{\bc}{\begin{center}}
\newcommand{\ec}{\end{center}}
\newcommand{\nn}{\nonumber \\}
\newcommand{\lb}{\label}
\newcommand{\p}[1]{(\ref{#1})}

\begin{document}

\begin{titlepage}

\vspace*{0.2cm}

\renewcommand{\thefootnote}{\star}
\begin{center}

{\LARGE\bf  A comment on instantons and their fermion zero modes in adjoint  $QCD_2$.}

\vspace{2cm}

{\Large A.V. Smilga} \\

\vspace{0.5cm}

{\it SUBATECH, Universit\'e de
Nantes,  4 rue Alfred Kastler, BP 20722, Nantes  44307, France. }

\end{center}
\vspace{0.2cm} \vskip 0.6truecm \nopagebreak

   \begin{abstract}
\noindent The adjoint 2-dimensional $QCD$ with the gauge group $SU(N)/Z_N$ admits topologically nontrivial gauge field configurations associated with nontrivial $\pi_1[SU(N)/Z_N] = Z_N$. The topological sectors are labelled by an integer $k=0,\ldots, N-1$. 
However, in contrast to $QED_2$ and $QCD_4$, this topology is not associated with an integral invariant like the magnetic flux or Pontryagin index. These instantons may admit fermion zero modes, but there is always an equal number of left-handed and right-handed modes, so that the Atiyah-Singer theorem, which  determines in other cases the number of the modes, does not apply.

The {\it mod. 2 argument} \cite{Unsal} suggests that, for a generic gauge field configuration, there is either a single doublet of such zero modes or no modes whatsoever. However, the known solution of the Dirac problem for a wide class of gauge field configurations \cite{Sm94,Sm96,Lenz} indicates the presence of
$k(N-k)$ zero mode doublets in the topological sector $k$. In this note, we demonstrate in an explicit way that these modes are not robust under a generic enough deformation of the gauge background and confirm thereby the conjecture of Ref. \cite{Unsal}.

The implications for the physics of this theory (screening vs. confinement issue) are briefly discussed.

   \end{abstract}

\end{titlepage}

\setcounter{footnote}{0}

\setcounter{equation}0

\section{Introduction}

The Lagrangian of the massless 2-dimensional QCD with fermions lying in the adjoint representation of the $SU(N)$ gauge group reads \cite{Dalley}
 \be
 \lb{L-adQCD}
 {\cal L} \ =\ {\rm Tr} \left\{ - \frac 1{2} F_{\mu\nu} F_{\mu\nu} + i \bar \psi \gamma_\mu D_\mu \psi \right\}, 
  \ee
  where $A_\mu = A_\mu^a t^a$, $\psi = \psi^a t^a$ is a  2-component Majorana spinor, $D_\mu \psi = \partial_\mu \psi - ig [A_\mu, \psi]$, and the coupling $g$ has the dimension of mass. We will not display the dependence on $g$ in what follows. It can always be restored on dimensional grounds. 
  In 2-dimensional Minkowski space, the gamma matrices can be chosen as $\gamma^0_M = \sigma^2, \gamma^1 = i\sigma_1$, where $\sigma_{1,2,3}$ are the standard Pauli matrices. Then $\gamma^5 = \gamma^0_M \gamma^1_M \ =\ \sigma_3$.
  The Euclidean gamma matrices, which we will mostly need in the following, are
   \be 
   \lb{gamma-E}
   \gamma^0_E \ =\ \sigma_2, \qquad \gamma^1 \ =\ \sigma_1 \,.
    \ee
  
  In this case, the elements of the center of $SU(N)$ do not act faithfully on the fields, and we are dealing with the group $SU(N)/Z_N$. Its fundamental group  is nontrivial, $\pi_1[SU(N)/Z_N] = Z_N$, which leads to the existence of topologically nontrivial Euclidean field configurations---the instantons \cite{Wit,Sm94}. 
  They are characterized by an integer $k = 0,1,\ldots,N-1$.
  
  In Refs. \cite{Sm94,Sm96}, we imposed certain natural boundary conditions for the fermion fields and solved the Dirac equation for a class of  topologically nontrivial backgrounds on a cylinder \cite{Sm94} and on the Euclidean plane \cite{Sm96}. We found the existence of 
  $k(N-k)$ left-handed and $k(N-k)$ right-handed zero modes in the spectrum. It was, however, argued in Ref.\cite{Unsal} that, for a  generic field configuration, there are only $[k(N-k)]_{\rm mod. 2}$ doublets of zero modes.
 
  The main motivation of this study was to reconcile these two seemingly contradictive statements. As a result,
    we confirm the conjecture of Ref. \cite{Unsal} and demonstrate how the {\it naive} zero modes disappear when the background is deformed in a  general enough way.
    
  The plan of the paper is the following. 
  
  In the next section, we consider, following Ref. \cite{Unsal}, the theory on a finite torus and show that, even in a topologically nontrivial sector, the field density can be brought to zero by a continuous deformation of the potential. 
  
  In Sect. 3, we study the spectrum of the Dirac operator and show that all its eigenstates with nonzero eigenvalues $\lambda$ are split the  quartets including two degenerate states with the eigenvalue $|\lambda|$ and two degenerate states with the eigenvalue $-|\lambda|$. The spectrum may include also some number of the zero mode doublets. This number  may be odd, and in this case at least one such doublet must stay on zero for an arbitrary deformation, or it can be even, and in this case the zero modes are not protected. This generic {\it mod. 2 argument} is confirmed by the analysis of the topogically nontrivial toroidal configuration with zero field density. The number of the zero modes depends on the fermionic boundary conditions. Choosing the conditions similar to the conditions in Refs. \cite{Sm94,Sm96,Lenz}, one finds in this case that the zero modes are absent if $N$ is odd and there are gcd$(N,k)$ doublets of zero modes if $N$ is even. This does not coincide with $k(N-k)$, but one can observe that the two estimates have the same {\it parity}. 
   
   In Sect. 4, we consider, following \cite{Sm94,Sm96,Lenz}, the theory on a cylinder $S^1 \times R$, with  $S^1$ being a finite spatial circle with antiperiodic boundary conditions for the fermion fields and $R$  the infinite Euclidean time axis, 
   $t \in (-\infty, \infty)$.
   For a special {\it Cartan} instanton background, the Dirac operator admits in this case $k(N-k)$ doublets of zero modes.
   The same result is reproduced for the theory placed on a finite torus.

   In Sect. 5, we first show that these zero modes are robust under a particular class of deformations that do not modify the values of the potential at $t=\pm \infty$. Then we consider more general deformations of the potential and show that in this case most zero modes disappear, leaving only one doublet of the modes when $k(N-k)$ is odd.
  
   In Sect. 6, we discuss the impact of the zero analysis presented in the main body of the paper on the physics of this theory --- whether it exhibits {\it confinement} with the area law for the fundamental Wilson loops or {\it screening} characterized by the perimeter law.

 \section{Instantons on the torus}
 \setcounter{equation}0
 
 We consider the theory on an Euclidean torus,
  \be
  \lb{torus}
  0 \leq  x  \leq L, \qquad 0 \leq  t \leq \beta \,.
  \ee
  The generic toroidal boundary conditions for $A_\mu(x,t)$ read \cite{Hooft-flux}
  \be
  \lb{bcA-gen}
   A_\mu(x+L,  t) &=&A^{\Omega_1}_\mu(x,  t) \equiv   -i [\partial_\mu \Omega_1(x,t)] \Omega_1^{-1}(x,t) + \Omega_1(x,t) A_\mu(x,  t) \Omega_1^{-1}(x,t)\,, \nn 
   A_\mu(x,  t+\beta) &=&A^{\Omega_2}_\mu(x,  t) \equiv   -i [\partial_\mu \Omega_2(x,t)] \Omega_2^{-1}(x,t) + \Omega_2(x,t) A_\mu(x,  t) \Omega_2^{-1}(x,t) \,. 
    \ee
   However, the gauge transformation matrices  $\Omega_{1,2} \in SU(N)$ are not quite arbitrary: we have to require that, going from the point $(x,t)$ to the point $(x+L, t+\beta)$ in two different ways, we obtain the same potential $A_\mu(x+L, t +\beta)$. Bearing in mind that the potential does not transform under the action of the center, we obtain the self-consistency condition
   \be 
   \lb{om1om2}
   \Omega_1(x,t+\beta)\, \Omega_2(x,t) \ =\ \omega_N^k  \,\Omega_2(x+L,t) \,\Omega_1(x,t)
     \ee
     with 
     $$\omega_N  = e^{2\pi i /N}\,.$$
      An integer $k$ labels the topological sector.
     
     We choose in the following $\Omega_1(x,t) = \mathbb{1}$ and the time-independent $\Omega_2(x, t) \equiv \Omega(x)$ that interpolates between 
     $\Omega(0) = \mathbb{1}$ and $\Omega(L) = \omega_N^k\mathbb{1}$. This gives
   \be
  \lb{bcA}
   A_\mu(x+ L,  t) &=&   A_\mu(x,  t), \nn
   A_\mu(x,  t+\beta) &=& -i [\partial_\mu \Omega(x)] \Omega^{-1}(x) + \Omega(x) A_\mu(x,  t) \Omega^{-1}(x) \,,
   \ee
   Note that $\Omega(x)$ represents a noncontractible loop in  $SU(N)/Z_N$. It is a ``large"  gauge transformation.

  \begin{thm} \cite{Yuya}
  The field density of any field configuration $A_\mu(x, t)$ satisfying \p{bcA} can be brought to zero by a smooth deformation.
  \end{thm}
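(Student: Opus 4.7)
The plan is to reduce the theorem to the existence of a flat reference connection $A^{(0)}_\mu$ satisfying the same boundary conditions \p{bcA}, after which the affine structure of the space of such connections makes the deformation essentially automatic. Indeed, suppose $A_\mu$ and $A^{(0)}_\mu$ both obey \p{bcA}. Then the difference $\tilde A_\mu := A_\mu - A^{(0)}_\mu$ transforms homogeneously under the twist,
\[
\tilde A_\mu(x,t+\beta) \;=\; \Omega(x)\,\tilde A_\mu(x,t)\,\Omega^{-1}(x),
\]
because the inhomogeneous piece $-i(\partial_\mu\Omega)\Omega^{-1}$ cancels in the subtraction. The $x$-periodicity is clearly preserved in the same way. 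Consequently the straight-line family $A_\mu(s) = A^{(0)}_\mu + (1-s)\,\tilde A_\mu$, $s\in[0,1]$, satisfies \p{bcA} for every $s$; it is a smooth deformation from $A_\mu(0)=A_\mu$ to $A_\mu(1)=A^{(0)}_\mu$, along which the field density interpolates continuously from $F_{\mu\nu}(A)$ to $F_{\mu\nu}(A^{(0)})=0$, as required.

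The nontrivial content is therefore the existence of a flat $A^{(0)}_\mu$ satisfying \p{bcA} in the prescribed topological sector. This is where two dimensions are special: as emphasised in the introduction, the sector label $k$ is carried entirely by the discrete cocycle datum $\Omega$, not by any local integral of the curvature, so there is no obstruction of the kind ``$\int F \in 2\pi\mathbb{Z}_{\neq 0}$'' to flattening. I would construct $A^{(0)}_\mu$ from a twisted Eguchi--Kawai representative: pick commuting-up-to-center holonomies $P,Q\in SU(N)$, for instance the clock matrix $P=\mathrm{diag}(1,\omega_N,\ldots,\omega_N^{N-1})$ and the $k$-th power $Q$ of the shift matrix, so that $PQ=\omega_N^k\,QP$. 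In the bundle realised by the constant transition functions $\Omega_1=P$, $\Omega_2=Q$ the zero connection is flat; a smooth $x$-dependent gauge rotation $g(x)=P^{-x/L}$ then absorbs $\Omega_1$ into the interior of the torus, turning $A^{(0)}_\mu$ into a non-zero but still flat configuration with trivial $\Omega_1=\mathbb{1}$ and some $x$-dependent $\Omega'_2(x)$ homotopic to the prescribed $\Omega(x)$.

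The remaining, and main, technical point is the matching of $\Omega'_2(x)$ with the specific $\Omega(x)$ appearing in \p{bcA}. Since the statement of the theorem depends on $\Omega$ only through its homotopy class in $\pi_1[SU(N)/Z_N]=Z_N$, two choices in the same class are related by a further smooth gauge transformation that acts linearly on the affine space of connections. Applying that transformation to the flat reference and to the original $A_\mu$ simultaneously leaves the straight-line interpolation above unchanged in form, and delivers the desired smooth deformation to a configuration of vanishing field density. The heart of the argument is thus the two-dimensional fact that flat $SU(N)/Z_N$-bundles exist in every topological sector on $T^2$; the rest is linear algebra in an affine space.
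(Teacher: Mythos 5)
Your proposal is correct, and once unwound it is the same deformation as the paper's: the paper gauge-transforms $A_\mu \to B_\mu$ with a matrix $U$ obeying the twisted conditions \p{bcU}, \p{bcU-3}, scales $B_\mu$ linearly to zero (possible because the conditions \p{bcB}, \p{bcB-3} are homogeneous), and transforms back; in the original variables this is exactly your straight-line path from $A_\mu$ to the flat reference $A^{(0)}_\mu = iU^{-1}\partial_\mu U$ of \p{A-deformed}. The difference lies in where the work is done. The paper's key step is the Lemma: an explicit construction of $U$ on the edges of the fundamental square built from the clock and shift matrices \p{CS}, \p{CS-Nk}, with the interior filled in using $\pi_1[SU(N)]=0$. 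You instead start from the flat bundle with constant transition functions $P,Q$ satisfying $PQ=\omega_N^k QP$, rotate by $P^{-x/L}$ to trivialize the spatial transition function, and then assert that the resulting $\Omega_2'(x)$, being homotopic to the prescribed $\Omega(x)$, is related to it by a globally defined gauge transformation. That assertion is true, but it is precisely the analogue of the paper's Lemma and should be backed by the same one-line argument rather than by the general remark that only the homotopy class matters: the required $h(x,t)$ must interpolate between $h(x,0)=\mathbb{1}$ and $h(x,\beta)=\Omega(x)\,\Omega_2'^{-1}(x)$, and the latter is a genuinely periodic loop in $SU(N)$ (the central factors dictated by the cocycle \p{om1om2} cancel in the product), hence contractible because $\pi_1[SU(N)]=0$; the contraction supplies $h$ on the strip, periodic in $x$. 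With that sentence added, your argument is complete, up to inessential conventions ($k\leftrightarrow N-k$, determinant phases in $P,Q$). What your packaging buys is conceptual economy: the only input is the existence of a flat connection realizing the given 't~Hooft twist, and the affine structure of the space of connections does the rest, a formulation that generalizes verbatim to any group whose twist is realized by holonomies commuting up to the center. What the paper's explicit $U$ buys is reuse: the same $C,S$ boundary conditions and the background $B_\mu=0$ are the starting point of the zero-mode counting in Sect.~3, so the explicit construction pays off beyond this theorem.
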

  
  \begin{proof}
  Consider first the simplest case ${\bf N=2}$. By a topologically trivial gauge transformation, we can bring $\Omega(x)$ to the form\footnote{$\tau^a = 2t^a$ are the Pauli matrices in the color space.} 
  \be
  \lb{Omega-2}
  \Omega(x) \ =\ \exp \left\{ \frac {i \pi x}L \tau^3 \right\} \,.
   \ee
   Perform now a gauge transformation
   \be
   \lb{AB-gauge}
   A_\mu \ \to\  B_\mu = -i (\partial_\mu U) U^{-1} + U A_\mu U^{-1}
    \ee
    with $U(x,  t)$ satisfying the following b.c. :
    \be
  \lb{bcU}
  U(x+L,  t) &=& i \tau^1 U(x, t) , \nn
   U(x,  t+\beta) &=& i  \tau^3 U(x, t) \Omega^{-1}(x) \,.
    \ee
    
    \begin{lem}
    A continuous matrix function satisfying \p{bcU} exists.
    \end{lem}
    
    \begin{proof}
    We choose $U(0,0) = \mathbb{1}$. Then the conditions \p{bcU} dictate the following values of $U(x, t)$ at the edges of the square:
    \be
    \lb{edge-2}
      U(x, 0) \ &=&\ \exp\left\{ \frac {i\pi x}{2L} \tau^1 \right\} \,, \nn
       U(0, t) \ &=&\ \exp\left\{ \frac {i\pi  t}{2\beta} \tau^3 \right\} \,, \nn
       U(x, \beta) \ &=&\ i\tau^3 \exp\left\{ \frac {i\pi x}{2L} \tau^1 \right\}  \exp\left\{- \frac {i\pi x}{L }\tau^3 \right\}  \,, \nn
       U(L, t) \ &=&\ i\tau^1 \exp\left\{ \frac {i\pi  t}{2\beta} \tau^3 \right\} \,.
       \ee
       In the corners:
       $$ U(0,0) = \mathbb{1}, \quad U(0,\beta) = i\tau^3, \quad U(L, 0) = i\tau^1, \quad U(L,\beta) = i\tau^2\,. $$
       Capitalizing on the fact that $\pi_1[SU(2)] = 0$, we can also continuously define 
$U(x,  t)$ in the interior of the square.    
    \end{proof}
  
    If $A_\mu(x,  t)$ satisfies \p{bcA} and $U(x,t)$ satisfies \p{bcU}, then the gauge-transformed field   $B_\mu(x,  t)$
    satisfies
    \be
    \lb{bcB}
      B_\mu(x+L,  t) \ &=& \ \tau^1 B_\mu(x,  t) \tau^1 \,, \nn
        B_\mu(x,  t+\beta) \ &=& \ \tau^3 B_\mu(x,  t) \tau^3 \,.
        \ee
        \item These b.c. allow for a continuous deformation $B_\mu \to 0$. Performing the inverse gauge transformation with the matrix $U^{-1}(x, t)$, we obtain a pure gauge configuration,
        \be 
        \lb{A-deformed}
        A_\mu^{\rm deformed}(x,  t) \ =\ i U^{-1} \partial_\mu U \,.
         \ee
         The corresponding field density is zero.
         
         \vspace{1mm}
         
         Let now ${\bf N=3}$. There are two nontrivial topological sectors: with $k=1$ and with $k =2$. If $k=1$, the loop  $\Omega(x)$, uncontractible in $SU(3)/Z_3$,  can be chosen in the form
         \be
         \lb{Omega-3}
          \Omega(x) \ =\   \exp\left\{\frac {2i\pi x}{3L}  {\rm diag}(1,1,-2) \right\}\,,
          \ee
    so that $\Omega(L) = \omega_3 \mathbb{1}$.
             
             Perform the gauge transformation \p{AB-gauge} with 
     $U(x,  t)$ satisfying the following b.c. :
    
    \be
    \lb{bcU-3}
    U(x+L,  t) \ &=&\ S U(x,  t)\,, \nn
     U(x,  t + \beta) \ &=&\ C U(x,  t) \Omega^{-1} (x)\,,
      \ee        
            where $C,S \in SU(3)$ are the ``clock and shift" matrices \cite{clock}:
             \be
             \lb{CS}
             C \ =\ \left( \begin{array}{ccc} 
             1& 0& 0 \\ 
             0 &\omega_3 & 0 \\
             0 &0 & \omega_3^{\,2} \end{array} \right), \qquad S \ =\ \left( \begin{array}{ccc} 
             0& 0& 1 \\ 
             1 &0 & 0 \\
             0 &1 & 0 \end{array} \right)\,.
             \ee
        
              The identities 
             $$ S^3 \ =\ C^3 \ =\ 1, \quad SC = \omega_3^2 CS $$
             hold.
            
            One can represent $C = e^{iP}, \ S = e^{iQ}$ with Hermitian $P,Q$. On the edges of the square, we derive
            \be
            \lb{edge-3}
      U(x, 0) \ &=&\ \exp\left\{\frac{ ixQ}L  \right\} \,, \nn
       U(0, t) \ &=&\ \exp\left\{ \frac {iP  t}\beta \right\} \,, \nn
       U(x, \beta) \ &=&\ C \exp\left\{\frac{ iQx}L \right\} \Omega^{-1}(x) \,, \nn
       U(L, t) \ &=&\ S \exp\left\{\frac{ iP t}\beta  \right\} \,.
       \ee
       In the corners:
       $$ U(0,0) = \mathbb{1}, \quad U(0,\beta) = C, \quad U(L, 0) = S, \quad U(L,\beta) \ = \  SC \,.$$
      Bearing in mind that $\pi_1[SU(3)] = 0$, we can also continuously define 
$U(x,  t)$ in the interior of the square.    

The gauge-transformed field \p{AB-gauge} satisfies the conditions
\be
         \lb{bcB-3}
         B_\mu(x+L,  t) \ &=& \  S B_\mu(x,  t) S^{-1} \,, \nn
          B_\mu(x,  t+\beta) \ &=& \ C B_\mu(x,  t) C^{-1} \,.
           \ee 
           It can be smoothly deformed to zero, which means that the original field $A_\mu(x, t)$ satisfying the conditions \p{bcA} can be smoothly deformed to a pure gauge form with zero field density.
           
           In the sector $k=2$, we choose 
     \be
         \lb{Omega-32}
          \Omega(x) \ =\   \exp\left\{\frac {2i\pi x}{3L}  {\rm diag}(2,-1,-1) \right\}\,,
          \ee 
          If we substitute $\omega_3 \to \omega_3^{-1} = \omega_3^2$ in the definition of the matrix $C$ and the subsequent formulas, the whole reasoning can be repeated.
          
          \vspace{1mm}
          
          ${\bf N \geq 4}$.   The generalization is straightforward. In the sector with a given $k$, one should choose $\Omega(x)$ in the form
       \be
         \lb{Omega-Nk}
          \Omega(x) \ =\   \exp\left\{\frac {2i\pi x}{NL}  {\rm diag}(\underbrace{k,\ldots, k}_{N-k}, \,\underbrace {k-N, \ldots, k-N}_k)   \right\}\,,
          \ee     
                    so that $\Omega(L) = \omega_N^{k} \mathbb{1}$ with $\omega_N = e^{2i\pi/N}$.
         
         The matrices $C, S$ entering the boundary conditions \p{bcU-3} and \p{bcB-3} may now be defined as 
         \be
         \lb{CS-Nk}
         C \ = \ e^{i\pi k(N+1)/N}{\rm diag}\{1, \omega_N^{k}, \ldots, \omega_N^{k(N-1)}\}\,, \nn
         S \ =\ e^{i\pi (N+1)/N}\left( \begin{array}{ccccc} 
             0& 0& \cdots & \cdots & 1 \\ 
              1 & 0 & \cdots & \cdots & 0 \\  
             \cdots &\cdots & \cdots & \cdots &   \cdots\\
              0 &\cdots & \cdots & 1 & 0 \end{array} \right)\,.
        \ee
    The phase factors in \p{CS-Nk} are chosen such that $C,S \in SU(N)$:  $\det C = \det S = 1$.     
        
        As earlier, the  field $B_\mu(x,  t)$, untwined by the gauge transformation, can be continuously deformed to zero.

  \end{proof}
  
  The theorem just proven clearly displays that a nontrivial topology is {\it not} associated in our case with any topological {\it charge} like the magnetic flux  in the 2D Abelian gauge theory  or the Pontryagin index  in the 4D Yang-Mills theory,\footnote{Here $F$ is the field density 2-form $F = F_{\mu\nu} dx^\mu \wedge dx^\nu/2$.}
  \be 
  \lb{zarjady}
  q_2 \ =\ \frac 1{2\pi} \int F, \qquad
  q_4 \ =\ \frac 1{8 \pi^2} \int {\rm Tr} \{ F \wedge F\}\,.
  \ee

   Indeed, the presence of the topological invariants \p{zarjady} depends on the fact that their
   integrands  are exact forms:
     $F^{\rm Ab} = dA$  and $ {\rm Tr}\{F \wedge F\} \ =\ d\, {\rm Tr}\{ A \wedge dA + \frac {2i}3 A \wedge A \wedge A \}$. But in $QCD_2$, the form $F$ is not exact and 
     Tr$\{F\}$ is simply zero.
  
  \section{Dirac operator and its spectrum}
  \setcounter{equation}0
  
  Bearing in mind the chosen explicit form \p{gamma-E} of the Euclidean gamma matrices, the Euclidean Dirac spectral problem reads 
    \be
 \lb{Dirac-eq}
 i {\cal D} \psi = \left(\sigma_2 \frac \partial {\partial  t} + \sigma_1 \frac \partial {\partial x} \right)\psi    -  i ( \sigma_2 [A_0, \psi]  + \sigma_1   [A_1, \psi]  ) 
  \ = \  i\lambda \psi  \,,
 \ee
 $\psi \equiv \psi^a t^a$.
 It has the following important symmetries:
  
  \begin{itemize}

\item For any eigenfunction $\psi$ with eigenvalue $\lambda$, the function 
\be
\lb{chiral} 
\psi' = \sigma_3 \psi
\ee is an eigenfunction with eigenvalue $-\lambda$.

\item For any eigenfunction $\psi$ with eigenvalue $\lambda$, the function \be
\lb{charge-conj}
\psi'' = \sigma_2\psi^*
\ee
 is an eigenfunction with the same eigenvalue. 
 \end{itemize}
 
 Suppose $\lambda \neq 0$. Then all the states
 $$\psi_1 , \quad \psi_2 = \sigma_3 \psi_1, \quad \psi_3 = \sigma_2 \psi_1^*, \quad \psi_4 = 
 \sigma_3 \sigma_2 \psi_1^*$$
 are linearly independent. Indeed, the states $\psi_1, \psi_3$ cannot coincide with the states 
  $\psi_2, \psi_4$ because the latter have a different eigenvalue of $\cal D$. On the other hand, the state $\psi_1$ cannot coincide with $\psi_3$. Indeed, the equality $\sigma_2 \psi_1^* = \kappa \psi_1$ can be spelled out in terms of the spinor components of $\psi_1$ as
   $$ \left( \begin{array}{cc} 
             0& -i  \\ 
             i & 0 
             \end{array} \right) 
            \left( \begin{array}{c} 
             a^* \\ b^*   
             \end{array}  \right)  \ =\ \kappa \left( \begin{array}{c} 
             a \\ b   
             \end{array}  \right) \quad \Longrightarrow \quad  \left\{ \begin{array}{c}
             ia^* = \kappa b \\ -ib^* = \kappa a \end{array} \right.\,,
             $$
             which is possible only if $a = b =0$.

 In other words, the spectrum of the  excited states of the operator
$H = {\cal D}^2$ in a generic gauge background is split into quartets of degenerate states.  
 $H$ is a second-order differential operator, which may be called a Hamiltonian. 
 The four-fold degeneracy of all the excited levels means that this Hamiltonian enjoys extended
 ${\cal N} = 2$ supersymmetry. Extended supersymmetry implies the existence of two doublets of Hermitially conjugated supercharges. One such doublet, associated with the symmetry \p{chiral} has a nice local form,
  \be
  Q \ =\ {\cal D}(1 + \sigma_3), \qquad Q^\dagger \ =\ {\cal D}(1 - \sigma_3)\,.
  \ee
  Another doublet associated with the symmetry \p{charge-conj} is nonlocal.
  
 Two of the quartet states, $\psi_1 + \psi_2$ and  $\psi_3 + \psi_4$ are right-handed (they are eigenstates of the chirality operator $\gamma^5 = \sigma^3$ with eigenvalue $+1$) and two other states,  $\psi_1 - \psi_2$ and  $\psi_3 - \psi_4$ are left-handed.
 
 If $\lambda = 0$, $\psi_1$ and $\psi_2$ are not necessarily linearly independent. They are not if $\psi_1$ has a definite chirality. In this case, we have a doublet of states, $\psi_1$ and $\psi_3$. Note that these states have opposite chiralities, so that the Atiyah-Singer index $n_L - n_R$ of the supersymmetric Hamiltonian $H$ is equal to zero. 
 
 Suppose that in a particular gauge background there is only one doublet of the fermion zero modes. Then these modes cannot shift from zero under a smooth deformation, because a doublet cannot become a quartet. But if there are two such doublets, they can move from zero simultaneously, forming a quartet. Similarly, if we have any even number $2n$ of doublets --- 
 they all can move from zero forming $n$ quartets of excited states. And if there were an odd number $2n+1$ of doublets, $2n$ of them can be shifted from zero, but one doublet is bound to stay.
 
 This is the  mod. 2 index argument of Ref. \cite{Unsal}. It says that there are no compelling reasons to expect the existence of more than one  doublet of the fermion zero modes of the Dirac operator
 \p{Dirac-eq} in a generic gauge background.
 
 To find out whether this theoretical lower bound for the number of the zero modes is saturated
 for  particular backgrounds, one should perform an explicit study of the solutions to the problem \p{Dirac-eq}.  
 
 One possibility \cite{Unsal} is to perform a gauge transformation \p{AB-gauge} of the gauge fields and simultaneously of the fermion fields and then smoothly deform $B_\mu(x,  t)$ to zero, in which case we are simply dealing with the free Dirac problem,
  \be
  \lb{Dirac-Psi}
   \left(\sigma_2 \frac \partial {\partial  t} + \sigma_1 \frac \partial {\partial x} \right) \Psi(x,  t) \ =\ 0
    \ee
   Eq. \p{Dirac-Psi} is equivalent to a doublet of equations
    \be
    \lb{Dirac-Psi-comp}
    \left( \frac \partial {\partial  x}  \pm i \frac \partial {\partial t} \right)\Psi_\pm(x,  t) \ =\ 0 \,,
     \ee
    for the upper  and lower spinor components. The solution to \p{Dirac-Psi-comp} is very simple: $\Psi_+$ must be holomorphic and $\Psi_-$ antiholomorphic in $z =  x + it$.

  A nontriviality resides, however, in the boundary conditions to be imposed on the fermion field. The result  depends on their choice. We choose the conditions
  \be   
  \lb{bcpsi}
   \psi(x+ L,  t) &=&   -\psi(x,  t), \nn
   \psi(x,  t+\beta) &=&   \Omega(x) \psi(x,  t) \Omega^{-1}(x) \,,
   \ee
These conditions are similar to \p{bcA} (periodicity in space and the periodicity up to a large gauge transformation under the imaginary time shift), but note the presence of  the extra minus in the first line. We inserted it to make contact with the settings of Ref. \cite{Lenz}, where the fermion fields were also chosen to be antiperiodic in spatial direction. This choice can be traced back to earlier papers \cite{Kut-Kog,Sm94} where the adjoint $QCD_2$ at finite temperature was studied --- as is well-known, a finite temperature amounts to a finite Euclidean time extension with antiperiodic conditions for the fermion fields. In \cite{Lenz}, the whole picture was rotated by $\pi/2$ and the theory was considered on a finite spatial circle, while keeping the antiperiodic fermion boundary conditions.

Consider first the case ${\bf N=2}$. After the gauge transformation 
 \be
 \lb{psi-gauge}
 \psi(x, t) \to \Psi(x,  t) \ =\ U(x, t) \psi(x,  t) U^{-1}(x,  t)\,,
  \ee
  with $U(x,t)$ satisfying \p{bcU},
  the transformed field satisfies the conditions
   \be
  \lb{bcPsi}
   \Psi(x+ L,  t) &=&  -\tau^1  \Psi(x,  t) \tau^1, \nn
   \Psi(x,  t+\beta) &=&   \tau^3 \Psi(x,  t) \tau^3 \,,
   \ee
  The matrix functions $\Psi_\pm(x,  t)$ are double periodic functions on the large torus,
  $\{0 \leq x \leq 2L, \ 0 \leq  t \leq 2\beta\}$. The only (anti)holomorphic  nonsingular double periodic function (the absence of the poles follows from the normalizabily requirement)  is a {\it constant}. We only have to check now if the boundary conditions \p{bcPsi} admit constant solutions. The answer is positive, the solution is $\Psi \propto \tau^3$. We have proven the theorem:
  \begin{thm}
  \lb{mod2-N2}
  In the $N=2$ theory with the fermion boundary conditions \p{bcpsi}, the Dirac operator on  the background $A_\mu(x,  t)$ representing a pure gauge    \p{A-deformed} [which corresponds to $B_\mu(x,  t) = 0$]
 has one left-handed and one right-handed zero mode.
  \end{thm}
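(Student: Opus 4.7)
The plan is to exploit the gauge transformation $U(x,t)$ constructed in the proof of Theorem~1 for $N=2$, which trivializes the gauge background (reducing $B_\mu$ to zero), and to recast the Dirac problem as a free one on the torus with twisted boundary conditions. First I would define $\Psi(x,t) = U(x,t)\,\psi(x,t)\,U^{-1}(x,t)$ and combine \p{bcpsi} with \p{bcU} to read off the twisted boundary conditions \p{bcPsi}: conjugation by $\tau^1$ together with the extra minus sign inherited from the spatial antiperiodicity of $\psi$, and conjugation by $\tau^3$ in the Euclidean time direction (the $\Omega(x)$ from \p{bcpsi} cancels against the $\Omega^{-1}(x)$ built into $U$). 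The free Dirac equation \p{Dirac-Psi} then decouples into \p{Dirac-Psi-comp}, so that $\Psi_+$ is a holomorphic and $\Psi_-$ an antiholomorphic $\mathfrak{su}(2)$-valued function of $z = x + it$.

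Next I would observe that, since $(\tau^1)^2 = (\tau^3)^2 = \mathbb{1}$, iterating each of the two twists gives back the identity, so the components $\Psi_\pm$ descend to genuinely doubly periodic matrix-valued functions on the enlarged torus $\{0 \leq x \leq 2L,\; 0 \leq t \leq 2\beta\}$. A normalizable holomorphic function on this torus cannot have poles, and a pole-free holomorphic function on a compact Riemann surface is constant; the same remark applies to the antiholomorphic component. The Dirac problem thus reduces to the purely algebraic question: which constant matrices $\Psi \in \mathfrak{su}(2)$ satisfy simultaneously $\Psi = -\tau^1 \Psi \tau^1$ and $\Psi = \tau^3 \Psi \tau^3$?

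To finish I would expand $\Psi = c_1 \tau^1 + c_2 \tau^2 + c_3 \tau^3$. Since $\tau^1 \tau^a \tau^1$ fixes $\tau^1$ and flips the signs of $\tau^2,\tau^3$, the first condition forces $c_1 = 0$; since $\tau^3 \tau^a \tau^3$ fixes $\tau^3$ and flips the signs of $\tau^1,\tau^2$, the second forces $c_2 = 0$. The unique surviving direction is $\Psi \propto \tau^3$, yielding exactly one holomorphic and one antiholomorphic solution --- i.e. one right-handed and one left-handed zero mode --- as claimed. The main obstacle is conceptual rather than computational: one must be sure that the $U$ constructed in the Lemma is regular enough for the twisted boundary conditions to be genuinely well-defined, and one must argue carefully that normalizability really does rule out meromorphic elliptic-function solutions of the Cauchy--Riemann equations on the doubled torus.
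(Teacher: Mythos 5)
Your proposal is correct and follows essentially the same route as the paper: gauge-rotate $\psi$ by the $U(x,t)$ of Theorem~1 to get the twisted conditions \p{bcPsi}, note that $\Psi_\pm$ are (anti)holomorphic and doubly periodic on the doubled torus, invoke normalizability/Liouville to reduce to constants, and solve the algebraic constraints to find the unique direction $\Psi \propto \tau^3$. The only difference is that you spell out the Pauli-matrix algebra the paper leaves implicit, which is a harmless elaboration.
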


   Note that, for the fermion boundary conditions with the positive sign in the first line in \p{bcpsi} and \p{bcPsi}, there would be no zero mode solutions whatsoever.
  
  \vspace{1mm}
  The next in complexity case is ${\bf N=3}$. Let $k=1$. We may repeat our reasoning by imposing the fermion boundary conditions \p{bcpsi} with $\Omega(x)$ given by \p{Omega-3}, performing the gauge transformation  with the parameter $U(x,  t)$ satisfying \p{bcU-3} and choosing the background $B_\mu(x,  t) = 0$. The problem boils down to  the search of the (anti)holomorphic Hermitian matrix functions $\Psi(x \pm it)$ that  satisfy the conditions 
  \be
  \lb{bcPsi-3}
   \Psi(x+ L,  t) &=&  -S \Psi(x,  t) S^{-1}, \nn
   \Psi(x,  t+\beta) &=&  C \Psi(x,  t) C^{-1} \,.
   \ee
   However, such functions do not exist. The conditions \p{bcPsi-3} imply the periodicity in  the imaginary time direction  and antiperiodicity in the spatial direction
   on the large torus, $\{0 \leq x \leq 3L,\  0 \leq  t \leq 3\beta\}$. The only (anti)holomorphic nonsingular matrix that satisfies this condition is $\Psi(x,  t) = 0$.
   Obviously, this reasoning applies to any odd $N$ with any $k$.

   We have proven the theorem:
    \begin{thm}
  In the theory with odd $N$ and with the fermion boundary conditions \p{bcpsi} in any topological sector $k$, the Dirac operator on  the background
  $A_\mu(x,  t)$ representing a pure gauge    \p{A-deformed}  does not admit zero modes.
   \end{thm}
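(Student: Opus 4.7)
My plan is to imitate the $N=3$ argument already displayed and check that the three crucial facts used there extend unchanged to any odd $N$ and any $k$. The first two steps are essentially routine; the nontrivial input is a short phase calculation involving the generalized clock and shift matrices \p{CS-Nk}, which is where the parity of $N$ enters.

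First, I would exploit Theorem 1 by writing the pure gauge background as \p{A-deformed} with $U(x,t)$ satisfying \p{bcU-3}, in which $C,S$ are now the matrices \p{CS-Nk}. Applying the same rotation to the fermion, as in \p{psi-gauge}, turns the boundary conditions \p{bcpsi} into
\be
\lb{bcPsi-gen}
\Psi(x+L,t) &=& -S\,\Psi(x,t)\,S^{-1}, \nn
\Psi(x,t+\beta) &=& C\,\Psi(x,t)\,C^{-1},
\ee
while the Dirac equation becomes the free equation \p{Dirac-Psi}, so $\Psi_+$ must be holomorphic and $\Psi_-$ antiholomorphic in $z=x+it$.

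Second, I would iterate \p{bcPsi-gen} exactly $N$ times in each direction to obtain
\be
\Psi(x+NL,t) &=& (-1)^N\, S^N\,\Psi\,S^{-N}, \nn
\Psi(x,t+N\beta) &=& C^N\,\Psi\,C^{-N}.
\ee
The heart of the argument is the identity $S^N=C^N=\mathbb{1}$, valid for every odd $N$. Writing $S=e^{i\pi(N+1)/N}P$ with $P$ the cyclic permutation matrix satisfying $P^N=\mathbb{1}$, one has $S^N=e^{i\pi(N+1)}\mathbb{1}=(-1)^{N+1}\mathbb{1}$, which equals $\mathbb{1}$ for odd $N$. The analogous computation for $C=e^{i\pi k(N+1)/N}\,\mathrm{diag}(1,\omega_N^{k},\ldots,\omega_N^{k(N-1)})$ yields $C^N=(-1)^{k(N+1)}\mathbb{1}=\mathbb{1}$ because $N+1$ is even. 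Combined with $(-1)^N=-1$, this collapses the iterated conditions to plain antiperiodicity in $x$ on $NL$ and plain periodicity in $t$ on $N\beta$.

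Finally, I would close with the standard fact that a nonsingular (anti)holomorphic function on the doubled torus $\{0\le x\le 2NL,\ 0\le t\le N\beta\}$ must be constant; the antiperiodicity in the spatial direction on the smaller torus then forces this constant to vanish, hence $\Psi\equiv 0$. The only non-mechanical point is the phase bookkeeping in \p{CS-Nk}; once the identities $S^N=C^N=\mathbb{1}$ are established, the rest is a word-for-word repetition of the $N=3$ discussion preceding the theorem. I would also remark for orientation that for even $N$ the same computation gives $S^N=-\mathbb{1}$, whose sign cancels the $(-1)^N$ and leaves room for nonzero constant solutions, consistent with the gcd$(N,k)$ doublets announced in the introduction for that case.
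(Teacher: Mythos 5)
Your proposal is correct and follows essentially the same route as the paper: rotate to the free Dirac problem, iterate the twisted boundary conditions $N$ times to get antiperiodicity in $x$ and periodicity in $t$ on the large torus (since $S^N$ and $C^N$ are proportional to the identity), and then Liouville plus antiperiodicity forces $\Psi \equiv 0$; the paper spells this out for $N=3$ and simply notes the argument extends to all odd $N$ and $k$, which is exactly what you verify. The only quibble is your closing aside on even $N$: the phase of $S^N$ is irrelevant because it cancels inside the conjugation $S^N \Psi S^{-N}$, and the parity of $N$ enters solely through the factor $(-1)^N$ produced by the antiperiodic spatial boundary condition --- this slip does not affect your odd-$N$ argument.
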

   
   \vspace{1mm}
   
   Consider now the case of generic even $N \geq 4$.
   
    We have to search constant Hermitian matrices $\Psi$  that commute with $C$  and anticommute with $S$ in \p{CS-Nk}.

    $\bullet$ Consider first the case $k=1$. Then the condition $\Psi C = C\Psi$ brings $\Psi$ in the Cartan subalgebra. There is only one (up to a factor) diagonal real traceless matrix that anticommutes with $S$:
     \be
     \lb{Cart-mode}
     \Psi \ =\ {\rm diag} (1, -1, \ldots, 1, -1) \, ,
      \ee
      giving a single zero mode of the Dirac operator.
      Clearly, this also applies to the theory with any $k$ that does not have common nontrivial divisors with $N$.

      $\bullet$ Let $k =2$. The matrix $C$ may now be chosen as
       $$C_{k=2} \ =\ {\rm diag} (1, e^{4i\pi/N}, \ldots,  e^{-4i\pi/N},  1, e^{4i\pi/N}, \ldots,  e^{-4i\pi/N})$$
       Each eigenvalue is repeated twice. The centralizer of such $C$ is the subalgebra
        \be
        \lb{cent-k2}
      \mathbb{c} \ =\  \underbrace{su(2) \oplus \cdots \oplus su(2)}_{N/2} \oplus \underbrace{u(1) \oplus \cdots \oplus u(1)}_{N/2-1}\,.
        \ee
        This centralizer includes the Cartan subalgebra, which gives the zero mode \p{Cart-mode} as earlier, and also certain nondiagonal matrices depending on $N/2$ complex parameters.  In the particular case $N=6$, these matrices have the form
       
       \be
       \lb{Psi-nonCart-6-k2}
      \Psi \ =\ \left( \begin{array}{cccccc}
      0&0&0&a_1&0&0 \\  0&0&0&0&a_2&0 \\  0&0&0&0&0&a_3 \\  a_1^*&0&0&0&0&0 \\ 
       0&a_2^*&0&0&0&0 \\  0&0&a_3^*&0&0&0 \end{array} \right)\,.
       \ee
       The condition $\Psi S + S \Psi = 0$ implies the chain of $N/2$ relations
          \be
          \lb{short}
       a_1 + a_2 \ = \dots \ =\ a_{N/2-1} +  a_{N/2} \ = \ a_{N/2} + a_1^* \ =\ 0 \,.
        \ee
       The solution is 
       \be
       \lb{nonCart-mode-k2}
       \{a_1, \ldots, a_{N/2}\} \ =\ \lambda\,\{1, -1, \ldots, (-1)^{N/2-1}\}
       \ee
       with a real $\lambda$ if $N/2$ is even and an imaginary $\lambda$ if $N/2$ is odd. This gives the second doublet of zero modes.
       
       We obtain the same result (two zero modes) for any $k$ with gcd$(N, k) = 2$. In this case, the centralizer of $C$ is still the subalgebra \p{cent-k2}, the nondiagonal elements of the centralizer are still parametrized by $N/2$ complex numbers $a_j$ and we still have one Cartan zero mode doublet and one doublet \p{nonCart-mode-k2}.

       $\bullet$ Let now gcd$(N,k) = 3$. In this case, the clock matrix $C$ includes $N/3$ different eigenvalues that enter thrice. The centralizer is 
      
  \be
        \lb{cent-k3}
      \mathbb{c} \ =\  \underbrace{su(3) \oplus \cdots \oplus su(3)}_{N/3} \oplus \underbrace{u(1) \oplus \cdots \oplus u(1)}_{N/3-1}\,.
        \ee 
   Its nondiagonal elements include $N$ complex parameters organized in two different ``ladders", as illustrated below for $N=6, k=3$.
    \be
       \lb{Psi-nonCart-6-k3}
      \Psi \ =\ \left( \begin{array}{cccccc}
      0&0&a_1&0&a_5&0 \\  0&0&0&a_2&0&a_6 \\  a_1^*&0&0&0&a_3&0 \\ 0& a_2^* & 0 & 0& 0& a_4\\   a_5^*&0&a_3^*&0&0&0 \\ 
       0&a_6^*&0&a_4^*&0&0  \end{array} \right)\,.
       \ee
   In contrast to \p{Psi-nonCart-6-k2}, the ladders in \p{Psi-nonCart-6-k3} are ``long" --- they involve 6 complex parameters each. On the other hand, the elements in the left ladder and in the right ladder are complex conjugated to each other.
   
   The condition $\{\Psi, S\} = 0$ that the matrix \p{Psi-nonCart-6-k3} should satisfy to represent a zero mode gives the long chain of relations
     \be
  a_1 + a_2 \ =\ a_2 + a_3 \ =\ a_3 + a_4 \ =\ a_4 + a_5^* \ =\ a_5^* + a_6^* \ =\ a_6^* + a_1 \ =\ 0 
    \ee
  for the left ladder, and the right ladder gives nothing new. The solution is 
  \be
       \lb{nonCart-mode-k3}
       \{a_1, \ldots, a_6\} \ =\ \{a, -a,a, -a, a^*, - a^*\}
       \ee
       with a complex a. For a generic $N$ with  gcd$(N,k) = 3$, the relations are
       \be 
       \lb{long}
        a_1 + a_2 \ =\  \ldots \ =\ a_{2N/3 - 1} + a_{2N/3}\ = \  a_{2N/3 } + a_{2N/3+1}^*  \ =\ \ldots \ =\ a_{N-1}^* + a_N^* \ =\ 0\,, \ee
       and their solution also involves a single complex parameter. This gives 2 nondiagonal zero mode doublets, to which the Cartan doublet should be added.

       $\bullet$ When gcd$(N,k) = 4$, the matrix $C$ includes 4 coinciding sets of $N/4$ different eigenvalues, the centralizer is  
     \be
        \lb{cent-k4}
      \mathbb{c} \ =\  \underbrace{su(4) \oplus \cdots \oplus su(4)}_{N/4} \oplus \underbrace{u(1) \oplus \cdots \oplus u(1)}_{N/4-1}\,,
        \ee 
        and its nondiagonal elements depend on $3N/2$ complex parameters organized in three ladders: two of them are complex conjugate to each other, depend on  $N$ different complex parameters and the condition $\{\Psi, S\} = 0$ gives a ``long" chain of relations like in \p{long}, 
        leaving only one complex parameter. Besides, there is a  ladder  depending on $N/2$ complex parameters and their complex conjugates. The requirement  $\{\Psi, S\}  = 0$ gives a short chain of relations like in \p{short}, leaving only one real parameter.
        We obtain four doublets of zero modes: a Cartan doublet and three nondiagonal doublets.
        
        $\bullet$ This counting is easily generalized for an arbitrary $k$. When gcd$(N,k) = r$ and $r$ is odd, the nondiagonal elements of the centralizer of $C$ are parameterized by $N(r-1)/2$ 
        complex numbers organized in $r-1$ ``long" ladders. Only a half of these ladders are relevant --- the other half includes complex conjugated parameters.    After imposing the condition  $\Psi S + S \Psi = 0$, only one complex parameter is left for each doublet of complex conjugate ladders.
        This gives $r-1$ doublets of zero modes, to which the Cartan doublet should be added.
        
        If $r$ is even, the nondiagonal part of the centralizer still depends on  $N(r-1)/2$ complex parameters organized in $r-1$  ladders. But only $r-2$ of these ladders  [$(r-2)/2$ doublets of complex conjugated ladders) are long. They originally include $N(r-2)/2$ complex parameters, of which only $(r-2)/2$ are left after imposing the condition $\{\Psi , S\} = 0$. This gives 
        $r-2$ doublets of zero modes. There is also a ``short" ladder depending on $N/2$ complex parameters, of which only one real parameter is left after imposing the anticommutation condition. This gives one zero mode doublet.
        
        All together we obtain 
        $$(r-2)_{\rm long} + 1_{\rm short} + 1_{\rm Cartan} \ = \ r$$ 
        doublets of zero modes --- the same number as for odd $r$. 
        
        We have proven the theorem:
         \begin{thm}
  In the theory with even $N$   with the fermion boundary conditions \p{bcpsi}, the Dirac operator on  the background
  $A_\mu(x,  t)$ representing a pure gauge    \p{A-deformed} [which corresponds to $B_\mu(x,  t) = 0$]  admits gcd$(N,k)$ right and  gcd$(N,k)$ left zero modes in the topological sector $k$.
   \end{thm}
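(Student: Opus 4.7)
The plan is to reduce the Dirac problem on the pure-gauge background to a purely algebraic counting of constant matrices, by the same gauge-rotation trick already used in the proof of Theorem~1 and in the $N=2$ and odd-$N$ cases. After rotating $\psi \to \Psi = U\psi U^{-1}$ with $U$ satisfying the analogue of \p{bcU-3}, the transformed fermion $\Psi$ obeys the twisted boundary conditions of the form \p{bcPsi-3} with the clock-and-shift matrices $C,S$ of \p{CS-Nk}, and solves the free Dirac equation. As before, the upper and lower components $\Psi_\pm$ are then forced to be (anti)holomorphic on the enlarged torus; normalizability excludes poles, so $\Psi$ must be a constant Hermitian matrix, and the whole problem collapses to counting the dimension of the space of such matrices obeying $[C,\Psi]=0$ together with $\{S,\Psi\}=0$. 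Each such $\Psi$ delivers a doublet of modes of opposite chirality, which explains the equal number of left and right zero modes.

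The main step is thus an algebraic count. The commutation condition with $C$ places $\Psi$ in the centralizer of $C$; since $C$ has $N/r$ distinct eigenvalues each of multiplicity $r=\gcd(N,k)$, this centralizer is the block-diagonal algebra $\underbrace{u(r)\oplus\cdots\oplus u(r)}_{N/r}$ modulo the tracelessness condition, generalizing \p{cent-k2}, \p{cent-k3} and \p{cent-k4}. I would then decompose $\Psi$ into its Cartan piece and a collection of off-diagonal ``ladders'' connecting different $C$-eigenspaces, and impose the anticommutation with $S$ on each piece separately. The Cartan piece always yields exactly one doublet, given up to normalization by the alternating pattern \p{Cart-mode}. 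Each off-diagonal ladder produces a chain of neighbour-to-neighbour relations $a_j+a_{j+1}=0$ that closes cyclically via the $S$-twist.

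The main obstacle is the careful bookkeeping of these ladders and of their closure conditions, which splits into two structurally different cases according to the parity of $r$. When $r$ is odd, all $r-1$ off-diagonal ladders close with a complex-conjugate identification at the wrap-around, producing a long chain of the form \p{long}; its alternating-sign solution leaves exactly one free complex parameter per complex-conjugate pair of ladders, giving $r-1$ additional doublets and hence $r$ in total. When $r$ is even, the ladder structure splits into $r-2$ long ladders (each contributing one doublet exactly as in the odd case) together with a single short ladder of the form \p{short}, whose closure forces a reality condition and leaves only one real parameter, contributing one more doublet; the total is again $r$. I would make the analysis transparent by first displaying the explicit patterns for $N=6$ with $k=2$ and $k=3$ (the matrices \p{Psi-nonCart-6-k2} and \p{Psi-nonCart-6-k3}), where the structural distinction between short and long ladders is most visible, before assembling the general $r$-count that yields the advertised $\gcd(N,k)$ doublets in the topological sector $k$.
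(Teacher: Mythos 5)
Your proposal is correct and follows essentially the same route as the paper's own proof: gauge-rotate to the twisted boundary conditions with the clock-and-shift matrices, use holomorphy plus normalizability on the enlarged torus to force $\Psi$ to be constant, and then count Hermitian matrices with $[C,\Psi]=0$, $\{S,\Psi\}=0$ via the Cartan piece plus short/long ladders, split by the parity of $r=\gcd(N,k)$, arriving at $r$ doublets. One small wording slip: the ladder entries connect positions carrying the \emph{same} eigenvalue of $C$ (the off-diagonal elements inside each block of its centralizer), not different $C$-eigenspaces, since otherwise they could not commute with $C$.
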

   
   As was mentioned above, the number of modes depends on the fermion boundary conditions.
   For example, if the periodic boundary conditions in the both directions are imposed, the number of the zero mode doublets is equal to  \cite{Unsal}
   \be
   \lb{PP}
     n_0^{\rm double\ periodic} \ =\ {\rm gcd}(N,k) - 1 \,.
      \ee 
   
   Indeed, we now have to count the elements of $\mathbb{c}$  that {\it commute} with $S$.
   This excludes the elements of the Cartan subalgebra. Consider a generic nondiagonal element of $\mathbb{c}$. It includes several ladders --- long and short. The condition $[\Psi, S] = 0$
   dictates that all the complex matrix elements in a long ladder coincide. The matrix elements in a short ladder also coincide with an additional constraint that they must be real. This gives the same count of parameters as in the problem with the boundary conditions \p{bcPsi-3}. 
  We arrive at \p{PP}.
  
  \vspace{1mm}
  
  If we choose the b.c. that are antiperiodic in imaginary time, but periodic in space, we need to count the matrices that commute with $S$ and anticommute with $C$. For $k=1$, $S$ is related to $C$ by a group conjugation, $S = VCV^{-1}$ and the same is true for their centralizers. Thus, the centralizer of $S$ also represents a Cartan subalgebra embedded in $su(N)$ in a noncanonical way. Only one of its generators anticommutes with $C$, and we obtain one single zero mode doublet.
  
  If $k > 1$, the matrices $S$ and $C$ belong to different conjugacy classes and there is no reason for the same counting in the problem where   $S$ and $C$ are interchanged. And generically the counting is different, indeed. In the case $N=4, k=2$, it happens to be the same, but for $N=6, k=2$ it is already different: there are two matrices that satisfy $[\Psi, C] = \{\Psi, S\} = 0$ and no matrices satisfying $[\Psi, S] = \{\Psi, C\} = 0$ whatsoever \cite{Unsal}.
     
         We presented the calculation of the number of fermion zero modes on a torus in a particular gauge background. For $k=1$, there is one  doublet of zero modes or none depending on whether $N$ is even or odd. This conforms to the mod. 2 argument outlined above. But for higher $k$, the number of the zero mode doublets is  sometimes larger than 1. 
  
  In order to understand whether the counting $n_0 =$gcd$(N,k)$ holds for a generic field configurations, it is natural to choose some other handleable gauge background and solve the problem in that case.

  \section{Cartan instantons and their zero modes}
  \setcounter{equation}0
  
   In this section, we will not unwind the gauge field by the gauge transformation \p{AB-gauge} and then deform it to zero, as we did before, but consider the original boundary conditions \p{bcA}, choose the simplest topologically nontrivial gauge background $A_\mu(x, t)$ and study the Dirac spectrum there.

   \subsection{On the cylinder}
   A similar problem was first solved in Ref. \cite{Sm94}. 
  In that paper, we studied the physics of the theory \p{L-adQCD} at finite temperature, i.e. the theory was put on the cylinder with the finite extension $\beta$ along the imaginary time $t$ and the infinite spatial extension. We imposed the boundary conditions
  \be
  A_\mu(x, t+\beta) \ &=&\ A_\mu(x, t) \,, \nn
  \psi(x,  t+\beta) \ &=&\ -\psi(x, t)\,.  
  \ee
  and solved the Dirac equation in a topologically nontrivial background with $k=1$.
   We obtained $N-1$ doublets of zero modes. This result was then confirmed in \cite{Lenz}, whose authors rotated the cylinder by 90$^o$ and considered the theory on a finite spatial circle of length $L$ and the infinite extension in $t$.  In \cite{Sm96}, we generalised the discussion for any $k$ and discussed also the Dirac problem on an infinite Euclidean plane.  We derived the presence of $k(N-k)$ doublets of zero modes. 
 
 We reproduce here this derivation following the approach of \cite{Lenz}, where the physical instanton picture is somewhat more transparent. In the second half of this section, we translate it 
   onto a finite torus.
   
   Thus, we impose the  boundary conditions\footnote{The physical picture happens to be more simple when the spatial fermionic boundary conditions are antiperiodic. It would also be interesting to perform a systematic study of the theory with periodic b.c. both for $A_\mu$ and $\psi$.}
  \be
  A_\mu(x+L, t) \ &=&\ A_\mu(x, t) \,, \nn
  \psi(x+L,  t) \ &=&\ -\psi(x, t)
  \ee   
  and  impose the Hamilton gauge $A_0(x,t) = 0$. 
The instanton $A_1(x,  t)$ is then interpreted as a  topologically nontrivial tunneling transition trajectory between the different vacua, similar to the interpretation of the familiar BPST instanton \cite{BPST}.

   We are in a position to study the vacuum structure of our theory. We consider the case $N=2$ first.

    In $QCD_2$, a classical vacuum with zero field density is the constant field configuration $A_1 =$ {\it const}. By a gauge rotation, one can bring $A_1$ to the Cartan subalgebra. For $SU(2)$, we may pose $A_1 = a\tau^3$.
    Classically, the energy of all such constant configurations is zero. But taking into account the quantum corrections due to fermion loops,\footnote{In two dimensions, there are no physical degrees of freedom associated with the gauge fields, and the latter do not contribute.}
      the effective potential emerges \cite{Kut-Kog}. 
      In Refs.\cite{Kut-Kog},  bearing in mind the finite temperature applications, the effective potential for the {\it zeroth} component of the vector potential was calculated assuming the antiperiodic boundary conditions for the fermions under the Euclidean time shift.  For $SU(2)$, this potential reads
      \be
      \lb{VeffA0}
      V^{\rm eff}(A_0 = a\tau^3) \ =\ \frac {\beta}{2\pi} \left[ \left( 2a + \frac \pi {\beta} \right)_{{\rm mod.} \frac {2\pi}{\beta} }- \frac \pi{\beta} \right]^2 \,.
       \ee
       In our case,  the dependence is the same:
      \be
      \lb{VeffN2}
      V^{\rm eff}(a) \ =\ \frac {L}{2\pi} \left[ \left(2a + \frac \pi{L}\right)_{{\rm mod.} \frac{2\pi}{L} }- \frac \pi{L} \right]^2 \,.
       \ee

        \begin{figure} [ht!]
      \bc
    \includegraphics[width=0.7\textwidth]{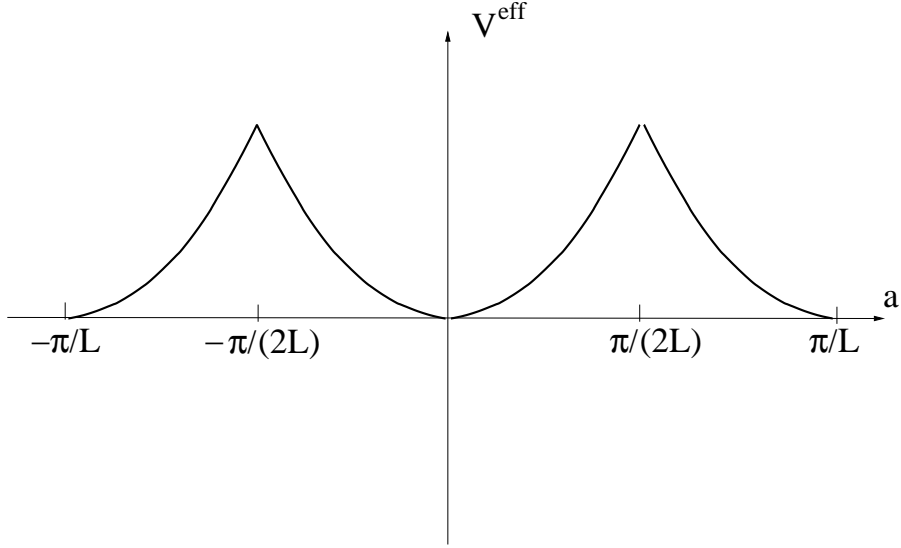}                  
     \ec
    \caption{Effective potential \p{VeffN2}}        
 \label{Veff}
    \end{figure}  
    
     It is periodic with the period $\pi/(L)$ (see  Fig. \ref{Veff}).
  This periodicity is due to the fact 
       that any $a$ outside the interval
       \be
       \lb{cell2}
       0 \ \leq\ a \ \leq \ \frac {\pi}{L} 
        \ee
        can be brought into this interval by a topologically {\it trivial} gauge transformation.
       There are two types of such transformations:
       \begin{enumerate}
       \item The shift $a \to a + (2\pi)/L$ realized by the gauge transformation
        \be
        \lb{Omega-double}
       \tilde \Omega(x) \ =\ \exp \left\{ \frac {2\pi i x}{L} \tau^3 \right\}\,.
        \ee
       In contrast to the loop \p{OmegaN2}, this loop is contractible in $SU(2)/Z_2$.
       \item The Weyl reflection $a\tau^3 \to - a\tau^3$ realized by the rotation by $\pi$ around the first or the second color axis.
       \end{enumerate}
       The interval \p{cell2} may be called {\it Weyl alcove}.
       
       We are left with only two vacuum states at $a=0$ and $a = \pi/(L)$. They are related by a noncontractible gauge transformation
       \be
       \lb{OmegaN2}
       \Omega(x) \ =\ \exp \left\{ \frac {i\pi x}L \tau^3 \right\} \,,
        \ee 
       
       The instanton that we are interested in (we will call it the {\it Cartan} instanton) interpolates between these states   along the path
     \be
         \lb{naive-2-gauge}
         A_1( t) \ =\ a( t) \tau^3 \quad {\rm with} \quad
         a(-\infty) = 0, \ a(\infty) =  \frac {\pi} {L}\,.
          \ee
         We will assume that $a(t)$ tends to its asymptotic values exponentially fast.
          
         Note that there are no antiinstantons: the  configuration  \p{naive-2-gauge} with $a(-\infty) = \pi/L$ and
         $a(\infty) = 0$
       belongs to the same topological class.
       
    The Dirac equation \p{Dirac-eq} with $A_0=0$ and $A_1(t)$ given by \p{naive-2-gauge} admits the following doublet of zero modes satisfying the antiperiodicity  condition:
     \be
     \lb{modes-cyl-2}
     \Phi^+ \ &=&\ \tau^+ \left(\begin{array}{c} 0 \\ 1 \end{array}     \right) e^{i\pi x/L} e^{\phi(t)}\,,\nn
      \Phi^- \ &=&\ \tau^- \left(\begin{array}{c} 1 \\ 0 \end{array}     \right) e^{-i\pi x/L} e^{\phi(t)}\,,
     \ee
     where 
     \be
      \lb{phi-a}
      \frac{d \phi(t)}{dt} \ = \ \frac \pi L - 2 a(t) \,.
        \ee
      When $t \to \pm \infty$, the solutions behave as
         \be
   \psi_\pm(t) \ \sim \ e^{-\pi |t|/L} \, .
    \ee
    They are normalizable.

    We see that there are two zero modes---with positive and negative chirality. They have the opposite color structure, which corresponds to the fields of positive and negative charge in the Abelian theory.
    In agreement with the standard Atiyah-Singer theorem, the zero modes have negative chirality in the former case and positive chirality in the latter case.

     Consider now the case $N=3$. The Weyl alcove for $SU(3)$ represents a triangle shown in Fig. \ref{alcove}. 
     The effective potential for the classical vacuum,
     $$ A_1 \ =\ {\rm diag}(a_1, a_2, a_3), \qquad \sum_j a_j = 0 \,,$$
     is the sum of three terms:
     
     \be
      \lb{VeffN3}
      V^{\rm eff}(a_j) \ =\ \frac {L}{2\pi} \sum_{j<k} \left[ \left(a_j-a_k + \frac \pi{L}\right)_{{\rm mod.} \frac{2\pi}{L} }- \frac \pi{L} \right]^2 \,.
       \ee 
     
      \begin{figure} [ht!]
      \bc
    \includegraphics[width=0.7\textwidth]{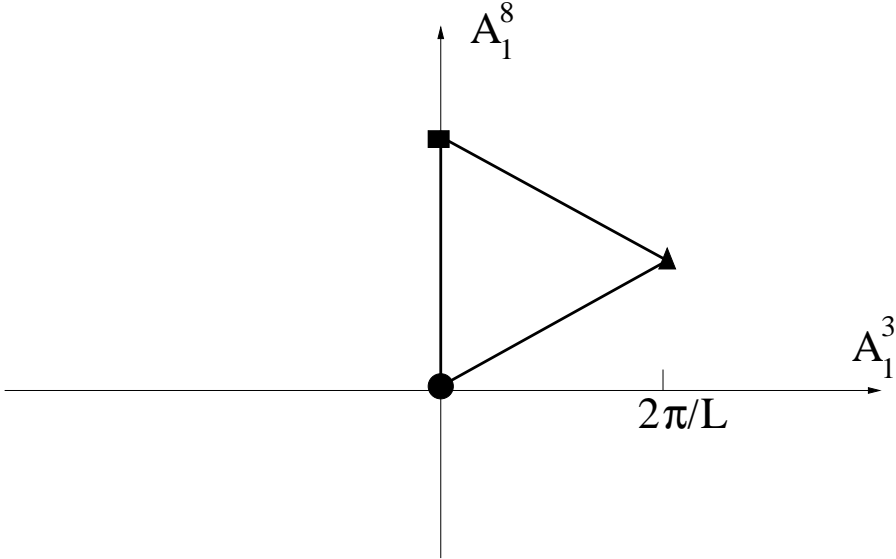}                  
     \ec
    \caption{The Weyl alcove  for $N=3$. The vertices of the triangle mark the topologically distinct vacua.}        
 \label{alcove}
    \end{figure}  
     
       The potential \p{VeffN3} has the minima at the vertices of the triangle:

        \be
        A_1^{(0)} = 0; \quad A_1^{(1)} = \frac {2\pi}{3L} {\rm diag} (1,1,-2) 
        = \frac {4\pi}{L\sqrt{3}} t^8; \quad A_1^{(3)} = \frac {2\pi}{3L} {\rm diag} (2,-1,-1)\,.
         \ee   
         
         The Cartan instantons interpolate between different vacua. There are in principle 6 such configurations corresponding to three edges of the triangle passed in two directions, but all ``clockwise" instantons belong to the same topological class $k=1$ and all ``counterclockwise" instantons belong to the class $k=2$. It is sufficient to consider only one of them, the configuration
          \be
          \lb{naive-3}
         A_1( t) \ =\ \frac 23 a(t)  \, {\rm diag} (1,1,-2)  \,,
          \ee  
  All other configurations have the same properties.  
  
  The Dirac   equation \p{Dirac-eq} with $A_0=0$ and $A_1(t)$ given by \p{naive-3} admits {\it two} doublets
   of zero modes: 
      \be
      \lb{modes-cyl-3}
       \Phi^+_{1,2} & =& E^+_{1,2} \left(\begin{array}{c} 0 \\ 1 \end{array}     \right) e^{i\pi x/L} e^{\phi(t)}
                    \,, \nn
     \Phi^-_{1,2} & =& E^-_{1,2} \left(\begin{array}{c} 1 \\ 0 \end{array}     \right)
                   e^{-i\pi x/L} e^{\phi(t)}   \,,        
          \ee                
                       where
                       \be
                       \lb{E+12}
                       E^+_1 \ =\ \left( \begin{array}{ccc} 0 & 0 & 1 \\
                       0 & 0 & 0 \\ 0 & 0 & 0 \end{array} \right), \qquad 
       E^+_2 \ =\ \left( \begin{array}{ccc} 0 & 0 & 0 \\
                       0 & 0 & 1 \\ 0 & 0 & 0 \end{array} \right)
                       \ee
                       and $E^-_{1,2}$ are Hermitially conjugated. $\phi(t)$ is related to $a(t)$ in the same way \p{phi-a} as before.

\vspace{1mm}
  
  Generically, for larger values of $N$, the instantons of different types are labelled by an integer $k=1, \ldots, N-1$ and have the form
  \be
  \lb{inst-k}
         A_1( t) \ =\ \frac 2N a( t)  \ {\rm diag}(\underbrace{k,\ldots, k}_{N-k}, \,\underbrace {k-N, \ldots, k-N}_k)  \,.
          \ee  
     Only the instantons with $k = 1, \ldots, [N/2]$ are essentially different.     
           For example, for $SU(4)$, the Weyl alcove is an asymmetric tetrahedron with the vertices representing the following elements of 
  $su(4)$: $O=0$ and 
  \be
  \lb{vert-SU4}
  A = \frac \pi {2L} {\rm diag}(1,1,1,-3); \quad B = \frac \pi {2L} {\rm diag}(2,2,-2,-2); \quad C = \frac \pi {2L} {\rm diag}(3,-1,-1,-1) \,.
   \ee
 It has four short and two long edges and, correspondingly,  there are instantons of two types: with $k=1$ and with $k=2$. 
 
 The instanton \p{inst-k} admits $k(N-k)$ zero mode doublets, given by the same formulas as in \p{modes-cyl-3} with 
  $k(N-k)$ different  matrices $E^+$'s and  $k(N-k)$ different  $E^-$'s. Only one component of these matrices somewhere in the upper right (correspondingly, lower left) block $k \times (N-k)$ is equal to 1. All other components are zeros.
  
  \subsection{On the torus}
  
  We now roll up into a ring also the Euclidean time dimension and impose the boundary conditions \p{bcA} on the gauge fields and \p{bcpsi} on the fermion fields with a topologically nontrivial $\Omega(x)$. 
  
  Consider first the case
  $N=2$. $\Omega(x)$ is given by \p{OmegaN2}.
  The instanton configuration reads
  \be
  \lb{}
  A_1(t) \ =\ \frac {\pi t}{L\beta} \tau^3 \,.
   \ee
  Again, the Dirac operator admits a doublet of zero modes, which have, however, a somewhat more complicated form than on the cylinder,
  \be
          \lb{naive-2-modes}
                    \Phi^+ & =& \tau^+ \left(\begin{array}{c} 0 \\ 1 \end{array}     \right)
                    \sum_{n = - \infty}^\infty \exp\left\{ \frac {i\pi  x }L (2n+1) \right\}\, 
                    \exp\left\{ -\frac{\pi \beta}L \left( \frac t \beta -n - \frac 12 \right)^2 \right\}\,, \nn
             \Phi^- & =& \tau^- \left(\begin{array}{c} 1 \\ 0 \end{array}     \right)
                    \sum_{n = - \infty}^\infty \exp\left\{ -\frac {i\pi  x }L (2n+1)\right \}\, \exp\left\{ -\frac {\pi\beta}L \left( \frac t \beta -n - \frac 12\right)^2 \right\}  \,.        
          \ee
   These functions belong to the $\Theta$  family. The same functions  describe the toric fermion zero modes in the constant magnetic field of unit flux in the Abelian theory.

   It is also interesting to see what would happen if we imposed the periodic boundary conditions in both directions:
           \be   
 \lb{bcpsi-per}
   \psi(x+ L,  t) &=&   \psi(x,  t), \nn
   \psi(x,  t+\beta) &=&   \Omega(x) \psi(x,  t) \Omega^{-1}(x) \,,
   \ee
  In this case, one obtains a doublet of the charged zero modes given by the similar expressions: 
  \be
          \lb{naive-2-modes-per}
                    \Phi^+_{\rm per.} & =& \tau^+ \left(\begin{array}{c} 0 \\ 1 \end{array}     \right)
                    \sum_{n = - \infty}^\infty \exp\left\{ \frac {2i\pi n x }L  \right\}\, 
                    \exp\left\{ -\frac{\pi \beta}L \left( \frac t \beta -n  \right)^2 \right\}\,, \nn
             \Phi^-_{\rm per.} & =& \tau^- \left(\begin{array}{c} 1 \\ 0 \end{array}     \right)
                    \sum_{n = - \infty}^\infty \exp\left\{ -\frac {2i\pi n x }L \right \}\, \exp\left\{ -\frac {\pi\beta}L \left( \frac t \beta -n \right)^2 \right\}  \,.        
          \ee
  
   And on top of that, there is a doublet of {\it constant neutral} zero modes:
  \be
 \lb{neutral}
   \Phi_1 \ =\ \tau^3 \left(\begin{array}{c} 1 \\ 0 \end{array}     \right) \qquad {\rm and}  \qquad
    \Phi_2 \ =\ \tau^3 \left(\begin{array}{c} 0 \\ 1 \end{array} \right) \,.
     \ee
    Two doublets altogether. 
    
    For higher $N$, the cylindrical argumentation above is translated onto the torus in a similar fashion. The toric zero modes have the same color structure as the cylindrical ones, while their coordinate dependence is the same as in \p{naive-2-modes}.  
     
    \vspace{1mm}
    
 We arrive at the conclusion: 
 
 {\it Naive instantons of the $SU(N)$ theory belonging to the topological sector $k$ admit
 $k(N-k)$ doublets of the fermion zero modes.}

       \section{Deformations}
       \setcounter{equation}0
       
       From the fact that different gauge background admit different number of zero modes, it follows that this number is not a topological invariant and may change under deformation. In this section, we confirm it in a quite explicit way. 
       We will study the deformations of cylindrical Cartan instantons --- this is essentially simpler than for the toric configurations.

       \subsection{Fixed asymptotics}
       To begin with, we will prove the following theorem:

       \begin{thm}
       In the gauge $A_0 = 0$, consider the gauge field background
       \be
       \lb{fixdeform}
       A_1(x, t) \ =\ A_1^{(0)} ( t) +  \alpha(x, t)\,,
       \ee
       where $ A_1^{(0)} ( t)$ is the Cartan instanton configuration \p{inst-k} and $ \alpha(x, t) = \alpha^a(x, t) t^a$
       is periodic in $x$:
        \be
        \lb{alpha-exp}
        \alpha(x, t) = \alpha^a(x, t) t^a \ = \ \sum_{m = -\infty}^\infty \alpha_m( t) e^{2\pi i m x/L}
        \ee
        with $\alpha_n( t)$ falling off to zero as $ t \to \pm \infty$ in such a way that the integral $\int \alpha_n ( t) \, d t$ converges there. 
        
        Then the equation \p{Dirac-eq} still has $k(N-k)_L + k(N-k)_R$ normalized zero mode solutions in any order in the perturbation $\alpha(x, t)$.
       \end{thm}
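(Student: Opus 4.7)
The plan is an iterative, perturbative construction of each of the $k(N-k)_L + k(N-k)_R$ unperturbed zero modes $\Phi_a^\pm$ as a formal power series in $\alpha$. I would focus on $\psi^{(0)} = \Phi_a^+$ (the $\Phi_a^-$ case being completely analogous), write $\psi = \sum_{n \ge 0} \psi^{(n)}$, and use $\mathcal{D} = \mathcal{D}^{(0)} - \sigma_1[\alpha,\cdot]$ to obtain the order-by-order recursion $\mathcal{D}^{(0)}\psi^{(n)} = \sigma_1[\alpha, \psi^{(n-1)}]$. Because both $\mathcal{D}^{(0)}$ and $\sigma_1[\alpha,\cdot]$ anticommute with $\sigma_3$, the freedom to add homogeneous solutions at each step lets one consistently take $\psi^{(n)} = (0, v^{(n)})^T$ for all $n$, and the recursion reduces to
\[D_-^{(0)} v^{(n)} = i[\alpha, v^{(n-1)}], \qquad v^{(0)} = E_a^+\, e^{i\pi x/L + \phi(t)},\]
where $D_-^{(0)} = \partial_x - i\partial_t - i[A_1^{(0)}, \cdot]$.

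At each order, normalizable solvability of $D_-^{(0)} v^{(n)} = S_n$ requires $S_n \perp \ker D_+^{(0)} = \operatorname{span}\{u_b = E_b^-\, e^{-i\pi x/L + \phi(t)}\}$. The first-order obstruction
\[M_{ba}^{(1)} \,\propto\, \int dx\,dt\; e^{2(i\pi x/L + \phi(t))}\,\operatorname{Tr}\bigl(\alpha\,[E_a^+, E_b^+]\bigr)\]
vanishes identically by \emph{color nilpotency}: both $E_a^+$ and $E_b^+$ live in the upper-right $(N-k)\times k$ block, so $E_a^+ E_b^+ = E_b^+ E_a^+ = 0$ and the commutator is zero. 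This yields a normalizable $v^{(1)}$, defined modulo $\ker D_-^{(0)}$; normalizability of each subsequent $v^{(n)}$ follows inductively from the hypothesis $\int \alpha_m(t)\,dt < \infty$ (which forces sufficiently rapid decay of $\alpha$ as $t \to \pm\infty$) combined with the $e^{-\pi|t|/L}$ decay of $v^{(0)}$, via the Green function of $D_-^{(0)}$ on $(\ker D_-^{(0)})^\perp$.

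For the higher-order obstructions $M^{(n)}$ with $n \ge 2$, I would invoke the two symmetries of the Dirac operator, the chirality anticommutation $\{\mathcal{D}, \sigma_3\} = 0$ and the charge-conjugation commutation $[\mathcal{D}, C] = 0$ with $C\psi = \sigma_2 \psi^*$. The first organizes non-zero eigenmodes into $\sigma_3$-paired doublets $(|\phi_m\rangle, |\sigma_3 \phi_m\rangle)$ with eigenvalues $\pm E_m$, forces all odd-order contributions to $M^{(n)}$ to vanish by chirality counting, and makes the same-chirality blocks of the even-order contributions cancel under the pairing together with the relations $\langle\Phi_a^\pm|V|\sigma_3\phi_m\rangle = \mp\langle\Phi_a^\pm|V|\phi_m\rangle$. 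The $C$-symmetry relates $\Phi_a^+ \leftrightarrow \Phi_a^-$ within each zero-mode doublet and would be needed to close the cancellation in the off-chirality blocks of $M^{(n)}$. The main technical obstacle is the detailed bookkeeping required to assemble the pairing cancellations at all even orders, especially in the off-chirality sector where the naive $\sigma_3$-pairing alone produces a non-vanishing $-2 V_{bm} V_{ma}/E_m$ contribution that must be eliminated using the $C$-symmetry together with the color structure of the Cartan background; an attractive alternative is an APS-type Fredholm-index argument exploiting the fact that $\alpha(x, \pm\infty) = 0$ leaves the asymptotic Dirac structure invariant, so $\dim \ker \mathcal{D}$ is determined by the asymptotic data alone and equals the unperturbed value $2k(N-k)$ at every perturbative order.
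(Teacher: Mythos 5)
Your setup is sound as far as it goes: the order-by-order iteration, the reduction to the lower spinor component, the identification of the cokernel of $D_-^{(0)}$ with the span of the conjugate modes $E_b^-e^{-i\pi x/L}e^{\phi(t)}$, and the vanishing of the first-order obstruction by block-nilpotency of the $E^+$ matrices are all correct (the last point is implicitly present in the paper too, since the source never develops a component along the negative roots at first order). But the argument does not close at orders $n\ge 2$, and neither of the two routes you offer can close it. The symmetry route is doomed in principle: the chirality anticommutation $\{\mathcal{D},\sigma_3\}=0$ and the charge conjugation $[\mathcal{D},\sigma_2 K]=0$ hold for an \emph{arbitrary} adjoint background, whereas the theorem is false for arbitrary deformations --- Sect.~5.2 shows that generic deformations lift all but $[k(N-k)]_{{\rm mod}\,2}$ doublets, and they do so precisely through the opposite-chirality pairing that you yourself identify as the uncancelled $-2V_{bm}V_{ma}/E_m$ block. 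No bookkeeping based on those two symmetries alone can make that block vanish; any proof must use the specific hypotheses of the theorem ($A_0=0$, $\alpha\to 0$ as $t\to\pm\infty$ with $\int\alpha_m(t)\,dt$ convergent), which your higher-order discussion never invokes. The APS-type fallback is also invalid: the asymptotic data fix only the spectral flow, i.e.\ the \emph{index} of $\mathcal{D}$ restricted to a chirality, and that index is zero here ($n_L=n_R$); $\dim\ker\mathcal{D}$ itself is not determined by the asymptotics and is not a deformation invariant --- that is the entire point of the mod.~2 discussion.

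The paper's proof avoids the Fredholm-alternative framework altogether. It Fourier-decomposes in $x$ and root-decomposes in color relative to the Cartan instanton, so that at each order the recursion splits into decoupled first-order ODEs in $t$, one per channel (Fourier mode $m$ tensored with a root or centralizer direction). In each channel it writes the particular solution with the appropriate integrating factor, choosing the limit of integration at $+\infty$ or $-\infty$ according to the sign of the asymptotic exponent, and proves by induction that every correction decays like $e^{-\pi|t|/L}$; the hypothesis that $\int\alpha_m(t)\,dt$ converges is exactly what makes the relevant integrals finite in the channel containing the unperturbed mode. In that language, your solvability condition lives in the single channel (negative root, $m=-1$) whose homogeneous solution grows at both ends, and the statement you would need at every order is $\int dt\, e^{\phi(t)}\,{\rm Tr}\{E_b^+\gamma_{n-1}^{(-1)}(t)\}=0$; establishing this requires tracking which root and centralizer components the $n$-th correction can populate under the iteration, not the spinor symmetries. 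As it stands, your proposal proves the first order, correctly locates the difficulty, but does not prove the theorem.
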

       This statement and the idea of the proof can be found back in \cite{Sm94}, but we did not give there much details, which we are going to provide
 now. 
       
       \begin{proof}

       Consider first the $SU(2)$ theory.
       
       \vspace{1mm}
       
       Consider the fate of the positively charged mode  [the first line in Eq.\p{modes-cyl-2}]. This mode carries the negative chirality, and we can replace in this case $\sigma_1 \to 1, \  \sigma_2 \to -i$. We are going to solve the equation
       \be
       \left(  \frac \partial {\partial  t} + i \frac \partial {\partial x}\right)\psi  + [a(t) \tau^3 + \alpha, \psi] \ =\ 0 
       \ee
       by iterations. We pose 
       \be
       \lb{iterations}
       \psi = \psi_0 + \psi_1 + \psi_2 + \ldots\,,
        \ee
        where $\psi_0$ is the  Cartan zero mode  in \p{modes-cyl-2} and $\psi_n$ is of order $\sim \alpha^n$. We obtain the chain of equations
       \be
       \lb{chain-xt}
        \left(  \frac \partial {\partial  t} + i \frac \partial {\partial x}\right)\psi_n  +  a(t) [\tau^3,  \psi_n] \ =\ -[\alpha, \psi_{n-1}] 
       \ \stackrel{\rm def}= \ \ \gamma_{n-1} \,.
       \ee
       Each equation in \p{chain-xt} is split into three equations for the different color components:
       \be
       \lb{chain-xt3}
       \left(  \frac \partial {\partial  t} + i \frac \partial {\partial x}\right) \psi_n^3 \ =\ \gamma_{n-1}^3\,,
        \ee
        
        \be
       \lb{chain-xt+}
       \left(  \frac \partial {\partial  t} + i \frac \partial {\partial x} + 2 a( t) \right) \psi_n^+ \ =\ \gamma_{n-1}^+
        \ee
        and 
       \be
       \lb{chain-xt-}
        \left(  \frac \partial {\partial  t} + i \frac \partial {\partial x} - 2 a( t) \right) \psi_n^- \ =\ \gamma_{n-1}^-
        \ee
        
        \vspace{2mm}

       {\bf I}. Let us assume first that the deformation $\alpha^a$ does not depend on $x$. Then the $x$-dependence of all the terms
       in \p{iterations} and \p{chain-xt3}-\p{chain-xt-}  is the same as that of $\psi_0$
       \be
       \psi_n(x, t) \ =\ \psi_n( t) e^{i\pi x/L} \,, \qquad  \gamma^a_n(x, t) = \gamma_n( t)  e^{i\pi x/L}\,.
        \ee
        In this case, the chain \p{chain-xt3} - \p{chain-xt-} is reduced to the system of ordinary differential equations,
         \be
       \lb{chain-t3}
       \left(  \frac \partial {\partial  t} - \frac \pi L \right) \psi_n^3( t) \ =\ \gamma_{n-1}^3( t)\,,
        \ee
        \be
       \lb{chain-t+}
       \left(  \frac \partial {\partial  t} - \frac \pi L + 2 a( t) \right) \psi_n^+( t) \ =\ \gamma_{n-1}^+( t)\,,
        \ee
        and 
       \be
       \lb{chain-t-}
       \left(  \frac \partial {\partial  t} - \frac \pi L -   2 a( t) \right) \psi_n^-( t) \ =\ \gamma_{n-1}^-( t)\,.
        \ee
      We will prove the existence of the normalized solutions to this system for all $n$ by induction. 
      
      \vspace{1mm}
      More exactly, we will prove that the correction $\psi_n( t)$ decays at $ t \to \pm \infty$  as $\sim e^{-\pi | t|/L}$ at any order $n$.
      
      \begin{itemize}
      
      \item For $n=0$, this follows from the explicit solution in \p{modes-cyl-2}.  
      
      \item Suppose that $\psi_{n-1}( t)$ decays as $e^{-\pi | t|/L}$. Let us prove that $\psi_n( t)$ also has this property. 
      Note first that, if $\psi_{n-1}( t) \ \sim \ e^{-\pi | t|/L}$, $\gamma_{n-1}( t)$  decays faster than $e^{-\pi | t|/L}$. 
      
      Note also that,
       by our assumption about the behavior of  $\alpha( t)$, the integrals $\int^\infty  e^{\pi  t/L} \gamma_{n-1}( t) \, d t$ and 
        $\int_{-\infty} e^{-\pi  t/L} \gamma_{n-1}( t) \, d t$ converge.
      
     \vspace{1mm}

       {\it (i)} Consider Eq. \p{chain-t3}. Its formal solution is 
      \be
        \psi_n^3( t) = - e^{\pi  t/L} \int_ t^\infty e^{-\pi  t' /L} \, \gamma_{n-1}^3( t') \,d t' \, + C \,.
         \ee
         Choose $C=0$.  For $ t \to \infty$, $\psi_n^3( t)$ decays faster than $e^{-\pi  t/L}$ together with $\gamma_{n-1}^3( t)$, and for $ t \to -\infty$, the integral is finite and $\psi_n^3( t) \sim e^{\pi  t/L} = e^{-\pi | t|/L}$.

     \vspace{1mm}
     
     {\it (ii)} 
    
     Consider now equation \p{chain-t+}. Choose its particular solution in the form
        \be
        \psi_n^+( t) \ =\ - e^{F( t)} \int_ t^\infty \,  e^{-F( t')} \, \gamma_{n-1}^+( t') \,d t' \,,
      \ee
      where 
      \be
      \lb{F}
      F'( t) = \frac \pi L - 2a( t)\,.
       \ee 
    
      When $ t \to \infty$, $F( t) \to -\pi  t/L$ and 
      $$
      \psi_n^+( t) \ \sim \ e^{-\pi  t/L}  \int_ t^\infty e^{\pi  t'/L} \, \gamma_{n-1}^+ ( t') \,d t' 
      \ \sim\   e^{-\pi  t/L}\,,  $$
       as the integral converges at the upper limit.
      
      When $ t \to -\infty$, $F( t) \to \pi  t/L   = -\pi | t|/L $ and 
       $$
      \psi_n^+( t) \ \sim \ e^{-\pi | t|/L}  \int_ {-\infty}^\infty e^{\pi  t'/L} \, \gamma_{n-1}^+ ( t') \,d t' 
      \ \sim\   e^{-\pi | t|/L}\,,  $$
      as the integral converges at both limits. 
      
      \vspace{1mm}
      
       {\it (iii)}
       
       For Eq. \p{chain-t-}, the reasoning is analogous. We choose its solution as
       
        \be
        \psi_n^-( t) \ =\ - e^{G( t)} \int_ t^\infty \,  e^{-G( t')} \, \gamma_{n-1}^-( t') \,d t' \,,
      \ee
       where $G'( t) = \frac \pi L [1 + 2a( t)]$.

      When $ t \to \infty$,
      $$
      \psi_n^-( t) \ \sim \ e^{3\pi  t/L}  \int_ t^\infty e^{-3\pi  t'/L} \, \gamma_{n-1}^- ( t') \,d t' 
      \ \sim\   e^{-\pi  t/L}\,.  $$
      When $ t \to -\infty$, 
       $$
      \psi_n^-( t) \ \sim \ e^{-\pi | t|/L}  \int_{-\infty}^\infty e^{-G( t')} \, \gamma_{n-1}^- ( t') \, d t' 
      \ \sim\   e^{-\pi | t|/L}\,.  $$

        \end{itemize}

      \vspace{1mm}
      
     {\bf II}.  Let us now  take into account the higher Fourier modes in the expansion \p{alpha-exp}.
      
  Their presence entails the presence of  higher Fourier modes in the expansion for the correction $\psi_n(x,  t)$:
  \be
  \psi_n(x, t) \ =\ \sum_m \psi_n^{(m)} ( t) \exp\left\{ \frac {i\pi x}L (1+2m) \right\}
  \ee
  The equations \p{chain-t3} - \p{chain-t-} acquire the form
    \be
       \lb{chain-t3m}
       \left(  \frac \partial {\partial  t} - \frac \pi L(2m+1) \right) \psi_n^{(m)\,3}( t) \ =\ \gamma_{n-1}^{(m)\,3}( t)\,,
        \ee
        \be
       \lb{chain-t+m}
       \left(  \frac \partial {\partial  t} - \frac \pi L(2m+1)  +2 a( t) \right) \psi_n^{(m)\,+}( t) \ =\ \gamma_{n-1}^{(m)\,+}( t)\,,
        \ee
        and 
       \be
       \lb{chain-t-m}
       \left(  \frac \partial {\partial  t} - \frac \pi L(2m+1)  -2 a( t) \right) \psi_n^{(m)\,-}( t) \ =\ \gamma_{n-1}^{(m)\,-}( t)\,,
        \ee
        where 
        \be
          \gamma_{n-1}^{(m)\,a}( t) \ =\  -  \sum_{p+q = m} [\alpha_p ( t), \psi^{(q)}_{n-1} ( t)]\,. 
           \ee
  We can prove now that the three components of $\psi_n^{(m)}( t)$ exponentially decay at large $| t|$ by induction in the same way as we did it in the absence of the higher harmonics in \p{alpha-exp}. Consider e.g. Eq. \p{chain-t3m}. Let $m>0$. Choose the particular solution of the equation in the form 
  \be
  \lb{m+}
        \psi_n^{(m)\,3}( t) = - e^{\pi  t (2m+1)/L} \int_ t^\infty e^{-\pi  t' (2m+1) /L} \, \gamma_{n-1}^{(m)\,3}( t') \,d t' \,.
         \ee
  By inductive assumption, $\psi_{n-1}^{(m)\,a}( t)$  and hence $ \gamma_{n-1}^{(m)\,a}( t) $ fall down $\sim e^{-\pi | t|/L}$. By the same reasoning as before, it follows that $ \psi_n^{(m)\,3}( t)$ falls down $\sim e^{-\pi | t| /L}$; the presence of the factor $2m+1$ in the exponents in Eq.\p{m+} is irrelevant.
  
  If $m < 0$, we choose the solution in the form 
   \be
  \lb{m-}
        \psi_n^{(m)\,3}( t) =  e^{\pi  t (2m+1)/L} \int_ {-\infty}^t e^{-\pi  t' (2m+1) /L} \, \gamma_{n-1}^{(m)\,3}( t') \,d t' 
         \ee
         and, by exploring the limits $t \to \pm \infty$, deduce that it falls down  $\sim e^{-\pi | t|/L}$.
          
   The equations \p{chain-t+m} and \p{chain-t-m} can  be treated in a similar way.

  \subsubsection{$N>2$}

      This proof can be translated without much change to the theories with higher $N$.
      Consider the $SU(3)$ theory. The Cartan instanton has the form \p{naive-3}. It has two doublets of zero modes. Let us add the deformation \p{alpha-exp} with the same properties as before and explore the fate of one of the positive chirality modes. For example, the fate of the mode
      
      \be
      \Phi^{+(0)}_1( t)   &=& \left( \begin{array}{ccc} 0 & 0 & 1 \\
                       0 & 0 & 0 \\ 0 & 0 & 0 \end{array} \right)_{\rm color}   \!
                       \left( \begin{array}{c} 0 \\ 1 \end{array} \right)_{\rm spin}  \Phi^{(4+i5)(0)}( t) 
                      \,, \nn
                      \ee
                      with 
    $$   \Phi^{(4+i5)(0)}( t)  \ =\ e^{i\pi x/L} e^{\phi( t)}, \qquad \phi'( t) = \pi[1 - 2a( t)]/L \,.$$

      We obtain the chain of the equations
      \be
       \lb{chain-t1238m}
       \left(  \frac \partial {\partial  t} - \frac \pi L(2m+1) \right) \psi_n^{(m)\, 1,2,3,8}( t) \ =\ \gamma_{n-1}^{(m)\, 1,2,3,8}( t)
        \ee
        and 
        \be
       \lb{chain-t+m4567}
       \left(  \frac \partial {\partial  t} - \frac \pi L(2m+1) \mp\frac {2\pi}L a( t) \right) \psi_n^{(m)\, 4\pm i5}( t) \ =\ \gamma_{n-1}^{(m)\,4\pm i5}( t) \,, \nn
       \left(  \frac \partial {\partial  t} - \frac \pi L(2m+1) \mp \frac {2\pi}L a( t) \right) \psi_n^{(m)\,6\pm i7}( t) \ =\ \gamma_{n-1}^{(m)\,6\pm i7}( t)
        \ee
        with
        \be
          \gamma_{n-1}^{(m)\,a}( t) \ =\  i f^{abc} \sum_{p+q = m} \alpha_p^c ( t) \psi^{(q)\,b}_{n-1} ( t)\,. 
           \ee
       The inductive proof that, at any order, the corrections $\psi_n^{(m)\, a}( t)$ fall down exponentially as $ t \to \pm \infty$ is translated from the proof for the $N=2$ theory without much change.
       
       For an arbitrary $N$, we obtain a similar chain. Ii involves the equations for the components $\psi_n^{(m)\, a}( t)$ where the index $a$ corresponds to the centralizer $SU(k) \times SU(N-k) \times U(1)$ of the Cartan instanton confuguration  \p{inst-k}. These components do not ``feel" the presence of the gauge field. It involves also $k(N-k)$ doublets of the components corresponding to the root vectors that do not commute with \p{inst-k}. 
      The inductive proof constructed above works also in this case.
      
       \end{proof}
       
       \subsection{Generic deformations}
       To understand that the deformations considered above are not the most generic ones, consider the $SU(2)$ theory with the gauge background
       \be
       \lb{double-inst}
       A_1(t) \ = \ \tau^3 b(t), \qquad {\rm with} \qquad b(-\infty) = 0, \quad b(\infty) = \frac {2\pi}{L}\,.
        \ee
        In this case, the Dirac equation admits {\it two} doublets of normalized zero modes. The positively charged modes are
        \be
        \lb{dve-mody}
        \Phi_1 (x,t)\ =\ \tau^+  \left( \begin{array}{c} 0 \\ 1 \end{array} \right) e^{i\pi x/L} e^{\phi_1(t)} \,, \nn
         \Phi_2(x,t) \ =\ \tau^+  \left( \begin{array}{c} 0 \\ 1 \end{array} \right) e^{3i\pi x/L} e^{\phi_2(t)}\,,
          \ee
          where 
         \be
         \frac {d\phi_1}{dt} = \frac \pi L - 2b(t), \qquad \frac {d\phi_2}{dt} = \frac {3\pi} L -2 b(t)\,.
         \ee
         In the Abelian theory, the existence of two zero modes follows from the Atiyah-Singer theorem: the field \p{double-inst} carries the double magnetic flux. However, in the non-Abelian theory, the configuration \p{double-inst} is topologically trivial, it satisfies the boundary condition
         \be
         A_1(x, \infty) \ =\ -i \partial_x \tilde\Omega(x) \, \tilde\Omega^{-1}(x) + \tilde\Omega(x) A_1(x, -\infty) \tilde\Omega^{-1}(x)
         \ee
         with a contractible loop \p{Omega-double}. 
         We expect that the zero modes \p{dve-mody} are not robust under a generic non-Abelian deformation.
          
          However, they {\it are} robust under the deformations of the same kind as in \p{fixdeform} with the requirement that $\alpha(x,t)$ vanishes rapidly at $t = \pm \infty$. All the steps in the proof of Theorem 5 can also be reproduced in this case. 
           The deformations of a more general nature such that the corresponding Dirac operator does not sustain zero modes anymore must exist, and they
           {\it do}.
           
           The loop \p{Omega-double} is contractible, i.e. there exists a continuous family $\Omega_\xi(x)$ such that  
           \be
           \Omega_\xi(0) = \Omega_\xi(L) = \mathbb{1}, \quad \Omega_0(x) = \Omega(x), \qquad \Omega_1(x) = \mathbb{1} \,.
            \ee
      Consider the corresponding family of field configurations,
      \be
      \lb{interpol}
      A_1^{(\xi)}(x, t) \ =\ -i \frac { b(t)L}{2\pi} \partial_x \Omega_\xi(x)\, \Omega_\xi^{-1}(x) \,.
       \ee
       As $\xi$ changes from 0 to 1, the configuration \p{interpol} interpolates between the field \p{double-inst} sustaining two doublets of zero modes to the configuration $A_1(x,t) = 0$ where the zero modes are absent\footnote{A constant zero mode for the free Dirac equation would be present if $\psi(x,t)$ satisfied {\it periodic} boundary conditions on the spatial circle, but our b.c. are {\it anti}periodic.}
        A very plausible guess is that the zero modes disappear as soon as $\xi \neq 0$ and the loop slides aside as in Fig. \ref{loop}.
        \begin{figure} [ht!]
      \bc
    \includegraphics[width=0.4\textwidth]{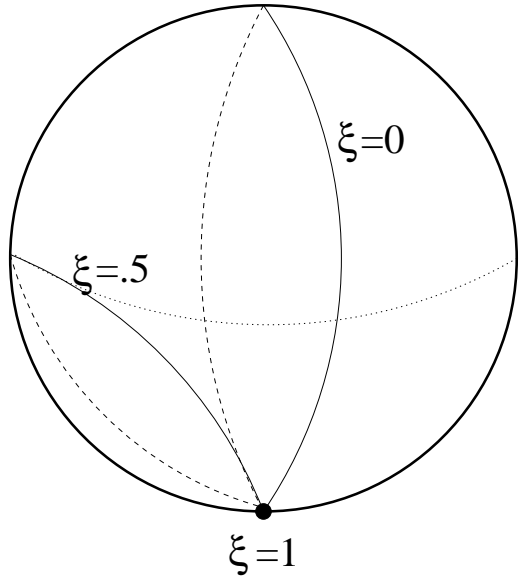}                  
     \ec
    \caption{Sliding loops}        
 \label{loop}
    \end{figure}  
         
         Note that the configuration 
         \be
         A_1^{(\xi)}(x, \infty) \ =\ -i \partial_x \Omega_\xi(x) \, \Omega_\xi^{-1}(x) 
          \ee
          is still a  vacuum configuration with zero field density [we assumed that $b(t \to \infty)$ approaches the value $2\pi/L$ exponentially fast]. It is topologically trivial, being related to  $A_1 = 0$  and to $A_1 = 2\pi \tau^3$ by  contractible gauge transformations. But the configurations \p{interpol} are {\it not} related to \p{double-inst} by  gauge transformations. They represent genuine deformations of  \p{double-inst}, and these deformations matter!
          
          Note also that one {\it can} find a gauge transformation $A_1  \to \tilde A_1$ that brings \p{interpol} to the form where $\tilde A_1(x,t)$ keeps its asymptotic values: $\tilde A_1^{(\xi)}(x, -\infty) = 0$ and $\tilde A_1^{(\xi)}(x, \infty) = 2\pi \tau^3$. This transformation is realized by $\omega_\xi(x,t)$ such that
          $$ \omega_\xi(x, -\infty) = \mathbb{1}, \qquad  \omega_\xi(x, \infty) = \tilde \Omega(x) \Omega_\xi^{-1}(x)\,.$$
          Thus, the fact that the configuration \p{interpol} carries no zero modes if $\xi \neq 0$ seems to contradict the statement above that the zero modes are robust under the deformations that keep the asymptotics. There is no contradiction, however, as the transformation $\omega_\xi(x,t)$
          brings about a nonzero $A_0(x,t)$. And if one imposes the Hamilton gauge, a generic deformed instanton configuration does not keep the simple boundary condition assumed in Sect. 5.1.
          
          For $N=2$, a generic topologicallly nontrivial instanton keeps a single doublet of zero modes. This is dictated by Theorem \ref{mod2-N2}, which is also valid on a cylinder. But the two doublets of zero modes \p{modes-cyl-3} for $N=3$ theory are not robust under a generic deformation.
          To understand that, consider along with the field \p{naive-3},
          \be \lb{naive-31}
         A_1( t) \ =\ \frac 23 a(t)  \, {\rm diag} (1,1,-2)  \,,
           \ee
       the field 
       \be
       \lb{pole-B}
        B_1( t) \ =\ \frac 43 a(t)  \, {\rm diag} (2,-1,-1) 
         \ee
         These two fields satisfy the same boundary conditions and belong to the  same topological class. The vacua at $t = \infty$ are related as
         \be B_1(\infty) \ =\ -i \partial_x V(x) V^{-1}(x) + V(x) A_1(\infty) V^{-1}(x) \,, \ee
         where \be V(x) = \exp\left\{ \frac {2i\pi x} L {\rm diag}(1,-1, 0)\right\} \ee
          represents a contractible loop in $SU(3)/Z_3$.
         
         Still, the configuration \p{naive-31} has two doublets of zero modes and the configuration \p{pole-B} has {\it four} such doublets: the modes
          \be
          \lb{dve-iz-chetyreh-1}
     \left( \begin{array}{ccc} 0 & 1 & 0 \\ 0& 0 & 0 \\ 0 & 0 & 0\end{array} \right)_{\!\!\rm color} \left(\begin{array}{c} 0 \\ 1 \end{array}     \right)_{\!\!\rm spin} e^{i\pi x/L} e^{\phi_1(t)}, \qquad
    \left( \begin{array}{ccc} 0 & 0 & 1 \\ 0& 0 & 0\\ 0 & 0 & 0\end{array} \right)_{\!\!\rm color} \left(\begin{array}{c} 0 \\ 1 \end{array}     \right)_{\!\!\rm spin} e^{i\pi x/L} e^{\phi_1(t)} 
                   \ee
                   with $\dot \phi_1(t) = \pi/L - 4a(t)$,
         \be
          \lb{dve-iz-chetyreh-2}
     \left( \begin{array}{ccc} 0 & 1 & 0 \\ 0& 0 & 0 \\ 0 & 0 & 0\end{array} \right)_{\!\!\rm color} \left(\begin{array}{c} 0 \\ 1 \end{array}     \right)_{\!\!\rm spin} e^{3i\pi x/L} e^{\phi_2(t)}, \qquad
      \left( \begin{array}{ccc} 0 & 0 & 1 \\ 0& 0 & 0 \\ 0 & 0 & 0\end{array} \right)_{\!\!\rm color} \left(\begin{array}{c} 0 \\ 1 \end{array}     \right)_{\!\!\rm spin} e^{3i\pi x/L} e^{\phi_2(t)}  
                   \ee          
              with $\dot \phi_2(t) =  3\pi/L - 4a(t)$,  and the Hermitially conjugated modes of opposite chirality.

         Consider now the family of configurations $A_1^{(\xi)}(x,t)$ that smoothly interpolate between $A_1(t)$ and $B_1(t)$ as $\xi$ changes
          from 0 to 1.
         For intermediate values of $\xi$, there probably are no zero modes at all.
          
          \section{Discussion: screening vs. confinement}
          \setcounter{equation}0
          
          The technical issue about  the existence or non-existence of fermion zero modes in adjoint $QCD_2$ is interesting by its relationship to the physics of this model. The model is confining in the sense that its physical spectrum does not involve states that carry color charge. But a nontrivial question is whether we are dealing with confinement in the strong sense where the potential between two static color sources grows linearly and the Wilson loop has the area law or with the confinement in the weak sense or {\it screening} with the perimeter law for the Wilson loop. 
          
          For example, in the pure Yang-Mills theory in 4 dimensions, we have\footnote{Or rather we {\it believe} to have.}
          strong confinement of fundamental  heavy sources, but adjoint sources are screened by the gluons. In $QCD_4$, the fundamental sources are also screened  due to the presence of dynamical fundamental quarks. The ordinary $QCD_2$
          with dynamical quarks has the same properties as $QCD_4$: any heavy colored source is screened. The adjoint  
          $QCD_2$ with massive fermions has the same properties as 4-dimensional gluodynamics: strong confinement for fundamental sources, whereas adjoint sources (and all other sources with zero $n$-ality) are screened.
          
          And the {\it massless} adjoint $QCD_2$ exhibits a nontrivial behavior. We showed in Ref. \cite{scr} that, contrary to naive expectations, the massless adjoint fermions may well screen the fundamental sources and all other sources with nonzero $n$-ality.\footnote {A similar phenomenon is known to take place in the massless Schwinger model where the massless fermions of charge one manage to screen any heavy source of integer or fractional electric charge \cite{Schw}. } This observation was based on the following conjecture:  \footnote{In Ref. \cite{scr}, we also presented several other arguments, but we only discuss here the most solid one having an immediate relationship to the main subject of this paper.}
          \begin{conj}
           In the topological trivial sector of the massless adjoint $QCD_2$, the fundamental Wilson loop average
        \be
        \langle W(C) \rangle_{\rm  triv} \ =\ \left\langle \frac 1N {\rm Tr} \left\{  P \exp \left( ig \oint_C A_\mu(x) \, dx^\mu \right) \right\} \right\rangle_{\rm  triv}
        \ee
        satisfies the property 
         \be
         \lb{perim-law} 
        \sigma \ =\  \lim_{{\cal A} \to \infty} \left[ - \frac {\ln \langle W(C) \rangle_{\rm  triv}}{{\cal A}} \right] \ =\ 0\,,
          \ee
          where ${\cal A}$ is the area embraced by the contour.
        \end{conj}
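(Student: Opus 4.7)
The plan is to combine the topology analysis of Sections~2--4 with a Schwinger-mechanism argument to show that the fundamental Wilson loop cannot develop an area-law exponent in the trivial sector.

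The first step is to localize the measure. By Theorem~1 applied at $k=0$, every configuration in the trivial sector is continuously deformable to $A_\mu = 0$. The mod.~2 argument of Section~3 predicts no protected zero modes for $k=0$, and a direct check shows that the free Dirac operator with the antiperiodic boundary conditions \p{bcpsi} has none. Hence $\det(i{\cal D})$ is non-zero all along any such deformation path, the measure $\det(i{\cal D})\,e^{-S[A]}$ restricted to the trivial sector is strictly positive, and the $e^{-S[A]}$ factor concentrates it on small-amplitude fluctuations around pure-gauge configurations, whose moduli are the Cartan holonomy controlled by the effective potential of Eqs.~\p{VeffN2}--\p{VeffN3}.

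The second step is to evaluate $\langle W(C)\rangle_{\rm triv}$ in this regime. Because $QCD_2$ has no propagating gauge degrees of freedom, the only dynamics of $A_\mu$ is that induced by the fermion loop. Integrating out the massless adjoint fermions --- either by non-Abelian bosonization of the adjoint Majorana fermions to a WZW-type model, or by a direct perturbative computation of the vacuum polarization --- produces an effective action for $A_\mu$ with a non-vanishing induced mass of order $g\sqrt{N}$, the non-Abelian analogue of the Schwinger mass. The resulting gauge propagator is short-ranged. Expanding $W(C)$ in powers of $A_\mu$ and resumming via the cumulant expansion, each connected contribution becomes an integral of a short-ranged correlator along $C$, which scales like the perimeter $|C|$ rather than the enclosed area ${\cal A}$. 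This gives $-\ln\langle W(C)\rangle_{\rm triv} = O(|C|)$ and hence \p{perim-law}.

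The main obstacle will be to make the induced-mass step rigorous in the non-Abelian setting: unlike the Schwinger model, no exact bosonic dual is immediately available, and one must carry out a non-Abelian bosonization while carefully tracking how the center quotient $SU(N)/Z_N$ affects the coupling of the WZW currents to $A_\mu$, or else organize the perturbation theory so that the resummation producing the induced mass is manifestly controlled. A secondary subtlety is the integration over the compact moduli of flat connections inside the trivial sector: one must verify, using the effective potentials \p{VeffN2}--\p{VeffN3} for the Cartan holonomy, that this moduli integration contributes at most a perimeter-scale prefactor and does not generate an area-law term of its own. Only after both points are settled can one conclude \p{perim-law} for generic $N$ and for contours $C$ of arbitrary shape.
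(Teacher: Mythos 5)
There is a genuine gap, and it starts with the status of the statement itself: in the paper this is a \emph{conjecture}, and the text states explicitly that its formal proof has not been given and that the difficulty lies in the non-Abelian nature of the theory. What the paper actually proves is only the Abelian (Schwinger-model) analogue, and there the decisive ingredient is not the short range of the correlator but the exact sum rule \p{int-FF-0}, $\int d^2x\,\langle F(x)F(0)\rangle_{n=0}=0$: in the representation \p{FF} the two-point cumulant is $\int_D\int_D\langle F(x)F(y)\rangle$, which grows like the area ${\cal A}$ times the integrated correlator, so a massive (exponentially decaying) $\langle FF\rangle$ with a nonvanishing integral would still give a nonzero string tension. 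Your central step --- ``the induced Schwinger-type mass makes the gauge propagator short-ranged, hence every connected cumulant scales with the perimeter'' --- is therefore false as stated even in the Abelian warm-up; the perimeter law requires the cancellation expressed by the sum rule, which you never formulate, let alone establish, in the non-Abelian theory. On top of that, in the non-Abelian case the Wilson loop is path-ordered, the cumulant expansion is not Gaussian, and converting the loop into a surface integral requires the non-Abelian Stokes theorem \p{nonab-Stokes} with area ordering and the string operators \p{stringop}; this ordering problem is precisely the obstruction the paper points to, and your proposal defers it (``the main obstacle will be to make the induced-mass step rigorous'') rather than resolving it. So the proposal is a programme, not a proof, and the one step it does spell out does not go through.

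Two secondary points. First, your opening step is both unnecessary and unjustified: the statement already restricts to the trivial sector, so Theorem~1 at $k=0$ and the absence of zero modes there buy you nothing for \p{perim-law}; and the claim that $e^{-S[A]}$ ``concentrates the measure on small-amplitude fluctuations around pure gauge'' has no small parameter behind it --- the infrared dynamics of a 2D gauge theory with coupling $g$ of mass dimension one is strongly coupled at large loops, and the paper makes no such localization claim. Second, the moduli (holonomy) integration with the effective potentials \p{VeffN2}--\p{VeffN3} concerns the finite-volume vacuum structure used in Sect.~4 for instanton counting; it is not an ingredient of the paper's argument about large Wilson loops on the plane, so invoking it here does not substitute for the missing non-Abelian sum rule.
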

         In other words, the averages of large Wilson loops fall down in this sector according to the perimeter law rather than area law.  
         
          Together with most other experts, we believe this conjecture to be correct though, to the best of our knowledge, its formal proof has not been given yet. The difficulty lies in the non-Abelian nature of the theory.
           The Abelian version of this conjecture is an exact theorem:
         \begin{proof}
         Consider the Schwinger model at the infinite Euclidean plane in the trivial topological sector, 
         \be
         \lb{flux0}
         \Phi  = \frac e{2\pi} \int F(x) \,d^2 x \ = \ \, 0 \,,
         \ee
         where  $F = \varepsilon_{\mu\nu} \partial_\mu A_\nu/2$ is the Abelian field density and the  fermion charge $e$  is included in the definition of the flux. Suppose that $F(x)$ vanishes at infinity fast enough. In view of \p{flux0}, one can then choose a gauge where $A_\mu(x)$ also vanishes there.
         And that means that the Abelian Wilson loop,
        
        \be
         W^{\rm Ab}(C)  \ =\   \exp \left( ie' \oint_C A_\mu(x) \, dx^\mu \right) \,,
        \ee 
        is equal to 1 for the asymptotic contour $C$ embracing infinity, and this is true for any charge $e'$ of a heavy probe. 
        
        Note that in the sectors with nonzero  flux $\Phi = n$, the vector potential behaves at infinity as 
         \be
        A_\mu(x) \ =\ -i [\partial_\mu e^{in\theta}] \,e^{-in\theta}\,,
         \ee
        where $\theta$ is the polar angle, and the asymptotic Wilson loop  takes the value
         $$W^{\rm asympt}_n =  \exp\left\{\frac{2i \pi n e'}e\right\}.$$
         
         We go back to the sector $n=0$ and consider a large but not asymptotically large loop. 
     The Stokes theorem allows us to write
        \be
        \lb{Stokes}
         \langle W^{\rm Ab}(C) \rangle_{n=0} \ =\  \left\langle  \exp \left\{ ie' \int_D F(x) \, d^2x \right\} \right\rangle_{n=0}\,, 
        \ee
        where $D$ is the domain embraced by the loop. The Gaussian nature of the path integral in the Schwinger model allows one to present it as    
          \be
         \lb{FF}
         \exp \left\{ - \frac {e'^2}2  \int_D \int_D d^2x d^2y \, \langle F(x) F(y) \rangle_{n=0}  \right\}\,.
          \ee      
       
          The correlator $\langle F(x) F(y) \rangle_{n=0} $ depends only on $x-y$. In the massless Schwinger model, it is known exactly. It decays exponentially $\propto e^{-\mu\sqrt{(x-y)^2}}$ at large separations ($\mu = e/\sqrt{\pi}$ being the mass of the Schwinger boson) and satisfies the property
          \be 
          \lb{int-FF-0}
          \int_{\rm whole\ plane} d^2 x \, \langle F(x) F(0) \rangle_{n=0} \ =\ 0 \,.
          \ee
          If the integral \p{int-FF-0} did not vanish, the exponent in \p{FF} would be proportional to the area ${\cal A}$ of the domain $D$ giving a nonzero string tension. But as it vanishes, the string tension vanishes too.
          
          If the integral is done over a finite region $D$ rather than the whole plane, the integrals in \p{int-FF-0}, \p{FF}  do not vanish. The double integral in \p{FF} is saturated by the values of $x$ that are 
          close to the border of $D$  and $y$ that are within the distance $\sim \mu^{-1}$ from $x$. We obtain  the perimeter law \cite{ja-Schw},
          \be
   \langle W^{\rm Ab}(C) \rangle_{n=0} \ =\ \exp \left\{ - \frac {e'^2 P}{4\mu} \right\}\,,
    \ee  
    implying that the potential between the external sources of charge $\pm e'$ levels off at large separations at the value
     \be
     \lb{Schw-V}
     V(\infty) \ =\ \frac {e'^2}{2\mu}\,. 
     \ee      
       \end{proof}       
         
         Consider now the  non-Abelian theory of interest. Assuming as before that the field density vanishes at infinity, the gauge potential  in the topological sector $k$ can be brought into the form
         \be
        A_\mu(x) \ =\ -i \left[\partial_\mu \exp\left\{ \frac{ik\theta}N \tau^3\right\} \right]  \, \exp\left\{ -\frac{ik\theta}N \tau^3\right\}\,,
         \ee
     This gives the values 
     \be
     W_k \ =\ e^{2i\pi k/N} 
      \ee
      for the asymptotic Wilson loop \cite{Wit}.   For $k=0$, it is just the unity.
      
          The property \p{perim-law} could possibly be proven using the non-Abelian version of the Stokes theorem. The latter reads \cite{Are} 
           \be
           \lb{nonab-Stokes}
           P  \exp \left\{ i \oint_C A_\mu(x) \, dx^\mu \right\} \ =\ {\cal P} \exp \left\{ i \int_D {\cal F} \,  d^2x \right\}\,,
            \ee
            where $P$ is the ordinary path ordering  and ${\cal P}$ is the operator of {\it area-ordering}, i.e. the infinitesimal loops should first be multiplied over  along the direction of $x$, and then along the direction of $y$. As for ${\cal F}$, it is not simply the field density, but the object
             \be
             {\cal F}(x) \ =\ U F(x) U^{-1} \,,
              \ee
              where 
               \be
               \lb{stringop}
               U = P \exp \left\{ i \int_O^x  A_\mu(x) \, dx^\mu  \right\} 
               \ee
              is the string operator connecting a reference point $O$ on the contour $C$ to the point $x$ inside the contour along a certain prescribed path.

              \vspace{1mm}
              
              Anyway, if we assume that the conjecture above is correct,  we can be sure that the $N=2$ theory exhibits screening. Indeed, the topologically nontrivial sector there {\it has} a doublet of zero modes, the fermion determinant vanishes and hence this sector does not contribute to the path integral. Whatever is true in the topologically trivial sector (like the perimeter law for the Wilson loop), is true  in the whole theory.
              
              The situation is less clear for higher $N$ starting from $N=3$. In \cite{scr}, the existence of the zero modes for $k\neq 0$ was assumed also for higher $N$, which entailed the conclusion that this theory exhibits screening. But we know today that, e.g. for $N=3$, a generic Euclidean field configuration does not sustain fermion zero modes and the contributions of the topologically nontrivial sectors do not vanish. In the sum of the contributions of the sectors $k=0,1,2$, cancellations might occur, so that $\langle W(C) \rangle$ exhibits the area rather than perimeter law and the theory is confining \cite{Unsal}. 
              
              In recent \cite{Komar}, it was argued, however, that this does not happen and adjoint massless $QCD_2$ with any unitary gauge group always exhibits screening. It is associated with the existence of $\sim e^{2N}$ extra hidden symmetries, which the theory defined on a large spatial circle enjoys.  
                String operators similar to \p{stringop} played an important role in this analysis. 
                
                An independent argument in favour of the screening scenario was given in \cite{Klebnew}. These authors studied the theory involving the massless adjoint fermions and massive fermions in the fundamental representation (call them {\it quarks}). Using the technique of the discretized light-cone quantization, they could perform a numerical study the spectrum of the quark-antiquark states and observed the continuous spectrum of these states above some threshold. This suggests that the quark-antiquark potential levels off at large separations [cf. Eq. \p{Schw-V}], implying the existence of the states where a heavy quark is screened by a cloud of an infinite number of massless adjoint fermions.\footnote{Any finite number of adjoint fermions are not capable, of course, to screen a fundamental source.}
                
                More studies in this direction are highly desirable.

 \section*{Acknowledgements}
 I am indebted to Igor Klebanov, Zohar Komargodski and Yuya Tanizaki  for  illuminating discussions.

\end{document}